\renewcommand*{\backref}[1]{}
\renewcommand*{\backrefalt}[4]{%
    \ifcase #1 {\footnotesize(Not cited.)}%
    \or        {\footnotesize(Cited on page~#2.)}%
    \else      {\footnotesize(Cited on pages~#2.)}%
    \fi}
\newtheorem{asm}{Assumption}
\newtheorem{prop}{Proposition}
\newtheorem{lem}{Lemma}
\newtheorem{defi}{Definition}
\newcommand{\Tr}{\mathcal{T}}
\newcommand{\HO}{\mathcal{H}}
\newcommand{\Lb}{\mathcal{L}}
\newcommand{\0}{\mathbf{0}}
\newcommand{\I}{\mathbb{I}}
\newcommand{\R}{\mathbb{R}}
\newcommand{\N}{\mathcal{N}}
\let\P\relax
\DeclareMathOperator{\P}{P}
\DeclareMathOperator{\E}{E}
\DeclareMathOperator{\Var}{Var}
\DeclareMathOperator{\Cov}{Cov}
\DeclarePairedDelimiter{\Ind}{\mathds{1}\lparen}{\rparen}
\newcommand{\peq}{\mathrel{\phantom{=}}}
\newcommand\nnfootnote[1]{%
  \begin{NoHyper}
  \renewcommand\thefootnote{}\footnote{#1}%
  \addtocounter{footnote}{-1}%
  \end{NoHyper}
}
\title{
    Causal Inference on Outcomes Learned from Text
}
\author{
    Iman Modarressi
    \and Jann Spiess
    \and Amar Venugopal
}
\date{
  This version: March 1, 2025
}
\begin{document}
\renewcommand{\sectionautorefname}{Section}
\renewcommand{\subsectionautorefname}{Section}
\renewcommand{\footnoteautorefname}{Footnote}

\maketitle

\begin{abstract}
    We propose a machine-learning tool that yields causal inference on text in randomized trials. Based on a simple econometric framework in which text may capture outcomes of interest, our procedure addresses three questions: First, is the text affected by the treatment? Second, which outcomes is the effect on? And third, how complete is our description of causal effects? To answer all three questions, our approach uses large language models (LLMs) that suggest systematic differences across two groups of text documents and then provides valid inference based on costly validation. Specifically, we highlight the need for sample splitting to allow for statistical validation of LLM outputs, as well as the need for human labeling to validate substantive claims about how documents differ across groups. We illustrate the tool in a proof-of-concept application using abstracts of academic manuscripts.
\end{abstract}

\nnfootnote{
    Authors are listed in alphabetical order.
    Iman Modarressi (\href{mailto:im560@cam.ac.uk}{im560@cam.ac.uk}), King's College, University of Cambridge;
    Jann Spiess (\href{mailto:jspiess@stanford.edu}{jspiess@stanford.edu}), Graduate School of Business, Stanford University;
    Amar Venugopal (\href{mailto:amarvenu@stanford.edu}{amarvenu@stanford.edu}), Department of Economics, Stanford University.
    A previous version of this manuscript was circulated and presented under the title ``Causal Effects on Text'' (July 2024).
    We thank Amir Feder, Friedrich Geiecke, Akhila Kovvuri, Sendhil Mullainathan, Ashesh Rambachan, Joe Romano, Keyon Vafa, and audience members at the 2024 Philadelphia Fed/Chicago Booth Conference on Frontiers in Machine Learning and Economics and the 2024 Zurich AI + Economics Workshop as well as at Stanford, Yale, Princeton, and Berkeley for helpful comments, suggestions, and discussions.
    We gratefully acknowledge the support of the Stanford Institute for Human-Centered Artificial Intelligence (HAI) and Google Cloud through the 2024 HAI Google Cloud Credits Award program.
}

\section{Introduction}
\label{sec:Introduction}

We run an increasing number of expensive randomized controlled trials (RCTs) to study the causal effect of interventions, yet their analysis remains limited by which outcomes are measured.
While the interventions we study may affect study subjects in many ways, our inference may miss important effects that go beyond the narrow aperture of the outcome variables we have collected or constructed.
For example, a new drug may affect patients' well-being not only in terms of the few health measurements we focus our analysis on and a micro-credit program may affect the financial lives of the poor beyond the investment and consumption data we collect.
Instead, such broader outcomes can be captured by unstructured responses of study subjects, especially in the form of natural language. These responses may encode subtle, context-specific nuances, including emotional tone and emphasis of themes or concepts not anticipated ex ante.
Yet analyzing effects based on such unstructured data requires new tools that combine the complexity of language with the discipline of causal inference.

In this article, we provide a framework for leveraging AI tools to provide valid inference on differences across two groups of text documents, such as the treatment and control groups in a randomized trial.
We first ask \emph{whether} the two groups are systematically different, which we tackle by combining machine-learning predictors and sample splitting. We then conceptualize the task of describing \emph{what} the systematic differences across texts are as a problem of finding and validating simple descriptions of texts in small datasets, which we implement using large language models (LLMs) followed by human validation. Finally, we quantify \emph{how complete} our resulting descriptions are by comparing them against a model-free benchmark.

Conventional econometric tools are limited in making inference on text. A first approach would be to extract a small number pre-specified features. However, such an approach would have to rely on strong priors about which part of the text is affected by a treatment or demonstrates differences across groups, and may miss signal outside of a narrow set of such measures. A second approach could instead use high-dimensional features such as word counts. In this latter case, we may be worried about spurious results when we search through many potential ways that only a few documents may differ, necessitating heavy multiple-testing penalties that may again result in limited power. Furthermore, despite their high dimensionality, features such as word counts may lose crucial contextual information.

Instead of relying on pre-specified text features or simple word counts, we deploy state-of-the-art techniques from artificial intelligence and machine learning in order to extract signals from text, including large language models (LLMs). These approaches can contextualize words in their context and effectively extract signals even when the number of documents we analyze is small, thanks to pre-training on large corpora of texts. At the same time, these tools come with a separate challenge: How can we interpret and ultimately provide statistical guarantees for output extracted from extremely complex black-box machine-learning systems that were furthermore trained on text we do not have access to and refined using human input that we do not observe?

To turn intransparent AI systems into a valid econometric tool, we first translate inference tasks into machine-learning questions and then validate their answers using sample splitting.
We start with asking \emph{whetehr} there is a systematic difference across different groups of text documents. We reframe this task into a question that is more amenable to the use of machine learning: Given a document, can we predict which group it is from better than some trivial baseline? For example, for a text document in a randomized trial, can we classify whether it is from the treatment or control group? If we can do so on a hold-out dataset that was not used to train our machine-learning algorithm, then we can provide evidence for a causal effect through a valid permutation test, without relying on assumptions about the underlying algorithm. This allows us to obtain valid inference even when a large language model provides predictions.

Having established that two groups of text documents vary significantly, we next ask \emph{what} the main differences are. We turn the question of describing differences into the task of generating hypotheses about which themes are different across groups. These themes optimally represent a low-dimensional summary of outcomes captured by text, with the goal of providing a meaningful and quantifiable low-dimensional description of differences.
These themes can include what texts talk about (topics) as well as how texts represent topics (sentiment).
To obtain such features, we provide instructions to a large language model (LLM) to identify themes that describe differences across groups.
For each of the themes, we allow the algorithm to propose a flexible scale for evaluating individual documents, such as the degree to which a topic appears or the tone as a scale that ranges from negative to positive sentiment.
These themes are learned based on training data that includes group assignments, such as which documents come from the treatment group of a randomized trial, to ensure that themes are informative about differences across groups.

To validate our description of differences between text documents, we combine sample splitting and human validation. If we run a complex large language model on a small experimental dataset, we may be worried that it returns theme classifications that do not represent systematic differences and instead are the result of overfitting to the existing data.
Beyond this statistical concern, AI systems like the LLM we deploy do not come with guarantees that the machine-provided summaries are accurate and \emph{substantively} refer to the described concepts. 
We, therefore, provide a sample-splitting scheme that separates (machine) hypothesis generation on the training from (human) hypothesis validation on held-out data. On the training data, our algorithm learns systematic differences across treatment and control groups and suggests ways of scoring themes. On the test data, an expert scores held-out texts according to the proposed themes before group labels became available. As a result, we obtain valid inference on the proposed themes without having to make the assumption that the proposed themes are correct or that an AI would be able to score them correctly, neither of which we think can be guaranteed.

Having obtained and validated descriptions of differences across groups of text documents, we finally ask \emph{how complete} these descriptions are.
To do so, we return to our reverse-prediction approach and compare how well the simple theme scores alone predict group labels relative to an unconstrained benchmark. If our measure of completeness is close to zero, then the themes do not contain a lot of information about systematic differences between text documents, and further analysis may be needed to improve descriptions. If, on the other hand, completeness is close to one, then the simple themes express most of the differences that we can find from the texts in the training sample.

As a proof of concept, we apply our method to abstracts of academic manuscripts on \textit{arXiv}. Starting with a classification into two groups, we analyze their systematic differences using Google's Gemini large language model.
We first demonstrate that these groups are significantly different by showing that the LLM can determine which document comes from which group better than a trivial benchmark.
We then prompt Gemini to describe differences across both in terms of a few core themes.  We then have a member of the research team label the held-out documents according to these themes and validate differences on the hold-out set. Finally, we benchmark how well these themes distinguish between groups based on logistic regression and interpret our results as saying that the intuitive descriptions capture almost all of the differences in our example.
While this proof of concept does not showcase an application to a randomized trial directly, it nevertheless provides suggestive evidence that the method can uncover the kind of systematic differences that allow for causal inferences based on text outcomes.

The approach outlined in this work comes with questions about the replicability of AI-based statistical inference as well as the cost of human validation. We propose a pre-specification strategy for empirical analysis with LLMs based on successive data access, where it is not the algorithm itself but instead its intermediate output that is publicly archived before additional data is accessed and the final validation is performed. Furthermore, we show how recent proposals for combining human and machine labels can be implemented within our workflow, and quantify trade-offs between validation cost and statistical precision in our empirical illustration.

Our work builds upon approaches across the social sciences, computer science, and econometrics on using machine learning to make inferences on text specifically and on high-dimensional outcomes more broadly.
Our approach to detecting effects on text via reverse prediction is building on \cite{Ludwig2017-xk}, while the sample-splitting scheme is similar to \cite{egami2022make}.
The text-scoring approach goes back to \cite{pmlr-v162-zhong22a} and is most closely related to the D5 approach from \cite{zhong2023goaldrivendiscoverydistributional}, while our approach to integrating and validating LLM outputs builds upon \cite{llmapplied} and has parallels to the work on images in \cite{ludwig2024machine}.
The completeness measure is informed by \cite{fudenberg2022measuring}, and our approach to combining machine and human labels follows \cite{robins1994estimation} and \cite{angelopoulos2023prediction}.
We review some related literature in more detail in \autoref{sec:Literature} before laying out our setup in \autoref{sec:Setup}.
We then discuss testing for effects in \autoref{sec:Testing}, describing differences in \autoref{sec:Describing}, and estimating the completeness of descriptions in \autoref{sec:Completeness}.
\autoref{sec:Empirical} provides our empirical illustration using academic abstracts,
\autoref{sec:Replication} discusses replicability and pre-specification, and
\autoref{sec:Conclusion} concludes.

\section{Literature and Related Work}
\label{sec:Literature}

Our project builds upon advances in natural language processing and large language models in computer science, textual analysis in the social sciences, and work in economics, econometrics, and statistics on integrating causal inference and machine learning into causal inference.

\paragraph{Natural language processing and large language models in computer science.}

Natural language processing (NLP) methods have rapidly evolved, enabling automated analysis of textual data. As a foundational model, Latent Dirichlet Allocation \citep[LDA;][]{blei2003latent} is a generative probabilistic model of text as a mixture of a small number of topics, making it possible to uncover hidden thematic structures within large text corpora. The Structural Topic Model \citep[STM;][]{robertssocial} extends topic models by incorporating document-level metadata.
Such topic-modeling methods complement approaches that summarize and differentiate text by classifying documents according to pre-defined categories \citep[e.g.][]{fuchunbayes}.
The emergence of large language models (LLMs) has opened new opportunities for identifying patterns within large texts based on limited task-specific training data. Inspired by \cite{brownfewshot}, which demonstrates that LLMs can learn from examples included in their input (a concept known as in-context learning), \cite{lin2023unlockingspellbasellms} shows that just a few carefully chosen examples in the input prompt can guide pre-trained base models to follow instructions and interact with users comparably to fine-tuned models.

Closely related to the techniques we apply to randomized trials, \citet{pmlr-v162-zhong22a} proposes an LLM-based hypothesis generation procedure in which an LLM is queried to learn descriptions that are true more often for one group of text, relative to another. These hypotheses are then validated via another question--answer AI agent.
Further work by \citet{zhong2023goaldrivendiscoverydistributional} introduces the D5 system, designed to automatically discover differences between extensive text collections based on user-specified goals. This system combines language models to generate hypotheses about differences and validate these hypotheses, leveraging a diagnostic benchmark and a meta-dataset for evaluation. 
Our approach is closely related, particularly since we also use LLMs to uncover systematic differences between groups of documents.
It builds on these approaches by extending the learned descriptions to ``causal themes", implementing verification via ground-truth labeling, and explicitly focussing on causal inference from experimental data.
Related to this line of work, \cite{zhu2022gsclipframeworkexplaining} introduces a framework for explaining changes between images using natural language, followed by an encoding step to quantify differences across groups.

\paragraph{Text analysis in the social sciences.}

The bulk of textual analysis in the social sciences relies upon more traditional NLP methods, 
largely due to their superior interpretability relative to more modern transformer-based deep learning
architectures. \citet{gentzkow_congressional_2018} compiles a corpus of US congressional floor speeches and goes on to parse and analyze the text of these speeches to measure trends in partisanship over time. 
\citet{voelkel_changing_2022} measures the impact of different textual interventions on attitudes towards immigration. 

There has also been a burgeoning literature on textual data methods for causal inference. 
\citet{keith_text_2020} outlines a method for using textual data to remove 
confounding in causal inference settings. \citet{mozer_matching_2020} considers textual data as a covariate in matching estimation of treatment effects. 
\citet{fong_discovery_2016} puts forth a generative model of text used to discover latent topics 
that are particularly relevant for treatment effect estimation. \citet{egami2022make} provides 
a detailed overview of current methods for the use of text in causal inference, including using 
text as treatment, and outlines both the difficulties inherent in using latent representations of text, 
such as topic model proportions, when conducting causal inference. That work builds on \citet{fong_causal_2023} in constructing relevant assumptions to obtain causal estimates from latent features derived from textual data.
\citet{feder2022causal} surveys work on bridging causal inference and natural language processing from the perspective of the NLP literature.

Finally, recent literature has also sought to leverage LLMs for descriptive analysis in textual settings
in the social sciences. \citet{kim_ai-augmented_2024} utilizes LLMs in survey data to impute user responses to unasked questions.
\cite{durvasula2024counting} fine-tunes an LLM to provide high-quality annotations of unstructured text.
Related to our approach, \citet{wu_elicitationgpt_2024} uses LLMs to produce 
summary points of text corpora, which are then used to produce measures of text similarity.
\cite{egami2024using} provides a framework for working with LLM-annotated texts.
\citet{imai2024_llmtreat} extends the use of LLMs to text-as-treatment settings in causal inference, leveraging generative AI to produce sample texts that modify selected features of interest in order to assess causal effects.

\paragraph{Integration of machine learning and AI into causal inference and economics.}

Recent work in economics, econometrics, and statistics has proposed innovative ways of integrating AI and machine learning into applied data anlysis \citep{mullainathan2017machine,athey2018impact,athey2019machine,dell2024deep}.
This includes work that uses supervised machine learning to identify causal effects on high-dimensional outcomes \citep{Ludwig2017-xk}, test for balance between covariates \cite{gagnon2019classification}, and impute missing observations \citep{rambachan2024programevaluationremotelysensed,battaglia2024inference, angelopoulos2023prediction,egami2024using}.
We build upon this work by applying a similar approach to text outcomes, where we use LLMs to predict missing labels.
Furthermore, we adapt the approach of \citet{fudenberg2022measuring} that evaluates the completeness of economic theories to quantify how well our themes describe differences between treatment and control outcomes.

We specifically build upon recent approaches that learn from image and text data in economics. \citet{ludwig2024machine} builds a deep-learning tool for creating new hypotheses from image data and communicating them to researchers. \cite{horton2023large,manning2024automated} leverage LLMs to generate novel hypotheses and accelerate the research process. Similarly, we propose an LLM-based tool that forms hypotheses based on text (and statistically validates them on held-out documents).
\citet{vafa2022career} and \citet{du2024labor} demonstrate the value of purpose-built LLM tools for the analysis of labor-market data. Similarly, we aim to build a tool that is specifically aligned with solving a causal-inference task. Finally, we apply the econometric framework from \cite{llmapplied} to capture the use and limitations of LLMs across the causal-inference tasks in our application.

\section{Setup and Goal}
\label{sec:Setup}

We consider causal estimation based on text in a potential-outcomes framework with binary treatment $W \in \{1,0\}$. For each observation, we assume that the outcomes of interest are represented within a text document $Y \in \mathcal{Y}$, where documents may vary in length and we do not make any ex-ante restrictions on their structure. Following standard causal-inference notation \citep{neyman1923applications,rubin1974estimating}, we assume that each of these documents is equal to the potential document $Y(1)$ or $Y(0)$, depending on whether the corresponding observation is treated ($W=1$) or in control ($W=0$), so that $Y = Y(W)$. We aim to describe overall differences between the random variables $Y(1)$ and $Y(0)$.

We assume that we have access to data from a randomized trial where treatment $W$ is assigned randomly, that is, $(Y(1),Y(0)) \perp W$. As a consequence, describing differences between the distributions of $Y(1)$ and $Y(0)$ is equivalent to describing differences between the distributions of $Y|W{=}1$ and $Y|W{=}0$.
Our approach, therefore, can be seen as making statistical inferences on the difference between two distributions over text more broadly, where $W$ denotes group membership.
If these two groups indeed come from a randomized experiment, then we provide statistical inference on \emph{causal} effects; if the two groups denote any other two sources of texts (as is the case in our illustration in \autoref{sec:Empirical} below), then we still provide statistical inference on differences across the two distributions, even if the differences can no longer be interpreted causally.

To describe differences between the distributions $Y|W{=}1$ and $Y|W{=}0$ of text, we assume that we have a random sample available in which text from both distributions are sampled independently.

\begin{asm}[Random samples of text documents]
  \label{asm:Sampling}
  We have access to $n = n_1 + n_0$ observations $\{(Y_i,W_i)\}_{i=1}^n$ of $n_1 > 0$ treated texts $\{Y_i\}_{W_i = 1}$ and $n_0 > 0$ control texts $\{Y_i\}_{W_i = 0}$, where the two samples are independent of each other and each is sampled iid from the respective population distributions
  $Y_i \overset{d}{=} Y|W{=}1$
  for $W_i=1$
  and
  $Y_i \overset{d}{=} Y|W{=}0$
  for $W_i=0$. 
\end{asm}

Throughout, we perform our analysis conditional on the sizes $n_1$ and $n_0$ of the treated and control samples. This covers randomized trials with a fixed fraction of treated units as well as trials with a fixed probability of assignment. Our results are easily extended to popular alternative randomization schemes, such as stratified and cluster randomization.

Relative to a standard causal-inference framework in which realizations of the outcome $Y$ take the form of a single number or a low-dimensional vector, we face a fundamental challenge: if outcomes are represented in text form (or, more generally, are very complex), then we cannot simply summarize the effect of treatment $W$ on outcomes $Y$ in terms of simple estimands like differences $\E[Y|W{=}1] - E[Y|W{=}0]$ (corresponding to the average treatment effect $\E[Y(1) - Y(0)]$). Instead, we need to find ways of describing core aspects of the differences across $Y|W{=}1$ and $Y|W{=}0$ in a manageable and statistically valid way, based on a potentially small sample $\{(Y_i,W_i)\}_{i=1}^n$. In the following three chapters, we tackle this challenge by providing an approach to addressing \emph{whether} there is a difference between $Y|W{=}1$ and $Y|W{=}0$ (\autoref{sec:Testing}), \emph{what} the difference is (\autoref{sec:Describing}), and \emph{how complete} our description of said difference is (\autoref{sec:Completeness}).

\section{Testing for an Effect on Text}
\label{sec:Testing}

Our first goal is to detect whether the two distributions $Y|W{=}1$ and $Y|W{=}0$ over text documents are the same.
That is, before asking \emph{how} the two distributions differ, we first want to check \emph{whether} there is any statistically significant difference at all.
Even for this seemingly more effortless task, standard approaches for testing for differences across two distributions are limited.
First, we could choose a few features of the text documents by hand that we think could be affected by treatment and then run a joint test for average differences in those features across the two groups. For example, we could hypothesize that the texts are longer in one group than the other, that a specific word (or group of words) is more likely to appear in one group, and/or that the sentiment of the text (as measured by some pre-defined algorithm) differs in a systematic way. But the number of ways in which we could provide simple features of complex texts is so vast that this approach is likely to miss important signals unless we have strong and correct priors about which aspect of text is most affected.
Second, we could instead search through \emph{many} such features automatically by trying them one by one. However, this approach risks leading to spurious findings, requiring us to apply severe corrections to avoid excess false positives. In the context of a very high number of potential features relative to a small or moderate number of text documents, these corrections may make detecting significant effects very hard, if not impossible.
Finally, we could apply general non-parametric tests that have been proposed for detecting the difference between two distributions, such as multi-dimensional generalizations of the Kolmogorov--Smirnov test. Yet even such tests would rely on specific representations of the text, as they typically focus on relatively low-dimensional outcome vectors and may quickly lose power otherwise.

Instead of relying on tests that compare text features across the two groups, we adapt a trick from \cite{Ludwig2017-xk} and ask: Given a document $Y_i$ in our sample, can we predict its group assignment $W_i$ better than some trivial benchmark? That is, rather than asking whether $W$ has a causal effect on $Y$, we ask whether $Y$ is predictive of $W$. This change in framing allows us to leverage non-parametric tools from machine learning directly, since these tools excel at predicting some simple outcome (here: $W_i$) from complex covariates (here: $Y_i$) by systematically searching over good predictors.
Of course, we now have to adapt this idea for the specific case of a small sample of complex text documents.
First, not all machine learning methods work well with text.
Second, text methods specifically may be very complex and do not come with any statistical guarantees that allow us to assess predictive signal without additional tests.
We, therefore, adapt this idea for text data and provide a specific testing scheme that ensures validity without making restrictive assumptions on the machine-learning method we use, remains computationally manageable, and still works well with small sample sizes.

We construct a two-step procedure to test whether text documents can predict group assignments (and thus, whether the two distributions are different). In the first step, we let an algorithm learn the relationship of text documents to group assignments. In the second step, we validate whether the learned relationship outperforms a trivial benchmark. To ensure that our inferences are statistically valid, we separate these two steps by sample splitting, following \cite{egami2022make}.
Specifically, we split the full sample $\{1,\ldots,n\}$ randomly into a training sample $\Tr$ and a hold-out sample $\HO$. Throughout, all of our results are conditional on the split into training and hold-out.

\begin{asm}[Random training and hold-out samples]
  \label{asm:Splitting}
  The hold-out sample $\HO$ of size $h = h_1 + h_0$ consists of a subset of $\{i; W_i {=} 1\}$ of size $h_1$ (with $0 {<} h_1 {<} n_1$)
  and a subset of $\{i; W_i {=} 0\}$ of size $h_0$ (with $0 {<} h_0 {<} n_0$), both drawn uniformly at random and independently of each other.
  The training sample $\Tr$ is its complement, $\Tr = \{1,\ldots,n\} \setminus \HO$.
\end{asm}

We first apply a machine-learning algorithm to the training data $(Y_i,W_i)_{i \in \Tr}$ and have it produce predictions $\widehat{W}_i$ for the (unseen) group assignments $W_i$ of each text $Y_i$ in the hold-out, $i \in \HO$.
These predictions can be classifications ($\widehat{W}_i \in \{1,0\}$) or probabilities ($\widehat{W}_i \in [0,1]$).
On the hold-out data, we then estimate how well these predictions $\widehat{W}_i$ line up with the actual group assignments $W_i$ by calculating a loss $L((W_i,\widehat{W}_i)_{i \in \HO})$.
For classifications $\widehat{W}_i \in \{1,0\}$, this loss could be the misclassification rate or the F1 score.
For probabilities $\widehat{W}_i \in [0,1]$, it could be the mean-squared error, log-likelihood, or the area under the curve (AUC).
In any case, we can gauge the quality of these predictions by comparing them to the loss $L((W_i,\widehat{w})_{i \in \HO})$ for a trivial benchmark $\widehat{w} \in \{0,1\}$ or $\widehat{w} \in [0,1]$ obtained from the training data (for example, by minimizing $L((W_i,\widehat{w})_{i \in \Tr})$) that does \emph{not} depend on the text data.
Intuitively, if we obtain lower loss from using the text documents, $L((W_i,\widehat{W}_i)_{i \in \HO}) < L((W_i,\widehat{w})_{i \in \HO})$, then we have found evidence that there is signal for predicting group assignments -- and thus that the distributions over the two groups must differ.

As our preferred implementation, we use an LLM to form classifications $\widehat{W}_i \in \{0,1\}$, and then evaluate prediction quality using the accuracy or, for very imbalanced groups, the F1 score.%
\footnote{
\label{ftn:accuracyf1}
    The accuracy is the fraction $\text{Acc}((W_i,\widehat{W}_i)_{i \in \HO}) = \sfrac{\sum_{i \in \HO} \Ind{\widehat{W}_i = W_i}}{h}$ of correctly labeled instances.
    The F1 score is the harmonic mean $\text{F1}(\cdot) = 2 
    \frac{\text{Pre}(\cdot) \: \text{Rec}(\cdot)}{\text{Pre}(\cdot) + \text{Rec}(\cdot)}$
    of precision $\text{Pre}((W_i,\widehat{W}_i)_{i \in \HO}) = \sfrac{\sum_{\widehat{W}_i = 1} W_i}{\sum_{\widehat{W}_i = 1} 1}$ and recall $\text{Rec}((W_i,\widehat{W}_i)_{i \in \HO}) = \sfrac{\sum_{W_i = 1} \widehat{W}_i}{\sum_{W_i = 1} 1}$.
    In these two cases, we choose $L((W_i,\widehat{W}_i)_{i \in \HO}) = - \text{Acc}((W_i,\widehat{W}_i)_{i \in \HO})$ and $L((W_i,\widehat{W}_i)_{i \in \HO}) = - \text{F1}((W_i,\widehat{W}_i)_{i \in \HO})$, respectively.
    The F1 score has the downside that it is less intuitive and not invariant to switching group labels, but it may be the better choice when group labels are very imbalanced, in which case the standard practice is to designate the smaller group the positive class ($W_i = 1$).
}
In principle, the above procedure works with any machine-learning tool that can extrapolate the relationship of text documents $Y_i$ to assignments $W_i$ from the training to the hold-out samples.
In practice, the procedure is particularly easy to implement by prompting a state-of-the-art LLM with a large context window.\footnote{In our application in \autoref{sec:Empirical}, we use Google Gemini 1.5 Pro, which allows for a context window of around one million tokens or around 250,000 words, with implementation details and prompts provided in \autoref{apx:Implementation}.}
Specifically, we provide all training documents with their group assignment along with all held-out documents without their assignment in a single prompt and query the LLM to provide the most likely group assignments for each of the held-out text documents.
Beyond the simplicity of its implementation, one advantage of this LLM-based approach is that it is effectively pre-trained on a large corpus and on related text-analysis tasks, making this approach effective even with a small number of documents.
At the same time, prompting the LLM in this way only produces binary classifications (rather than group probabilities) and does not directly optimize for any specific loss function.
An alternative approach that still leverages LLMs would be to obtain low-dimensional embeddings from an LLM, and then obtain more granular probability estimates via non-parametric prediction.

Next, we construct a valid test of a difference in distributions from the chosen measure of out-of-sample prediction quality.
Rather than relying on assumptions on the machine-learning tool or large-sample arguments, we use a permutation test to turn prediction quality into a $p$-value that is valid irrespective of method and sample size, and instead relies on random sampling and sample splitting alone.
Specifically, we consider random permutations $\pi$ of the hold-out units $\HO$, and then compare the improvement in performance
\[
  \Delta = L((W_i,\widehat{w})_{i \in \HO}) - L((W_i,\widehat{W}_i)_{i \in \HO})
\]
over trivial predictions to the distribution of improvements
\(
  \Delta_\pi = L((W_{\pi(i)},\widehat{w})_{i \in \HO}) - L((W_{\pi(i)},\widehat{W}_i)_{i \in \HO})
\)
where actual treatment assignments are randomly permuted.
Intuitively, if there is no true difference between the two groups, then the loss improvement $\Delta$ on the actual data should not look systematically different from the improvement $\Delta_\pi$ when group assignments are shuffled. If, on the other hand, the predictions $\widehat{W}_i$ contain true signal about the assignments $W_i$, then we hope that the improvement $\Delta$ in loss is larger than many of the other draws $\Delta_\pi$.
This suggests the $p$-value
\begin{align*}
  \widehat{p} = \frac{1 + \sum_{b=1}^B \Ind{ \Delta \leq \Delta_{\pi_b}}}{1 + B}
\end{align*}
with $B$ permutations $\pi_b$ of $\HO$ drawn independently and uniformly at random,
and the associated test with significance level $\alpha$ that rejects the null hypothesis $Y|W{=}1 \stackrel{d}{=} Y|W{=}0$ for $\widehat{p} \leq \alpha$.%
\footnote{
  The $p$-value measures the fraction of draws that lead to an improvement that is at least the same as $\Delta$, including the original data. Therefore, we add one to the numerator and denominator. As a result of this adjustment, the test is conservative for small $B$, and has asymptotically exact size as $B \rightarrow \infty$.
}
 As long as the algorithm that produces the treatment assignments $(\widehat{W}_i)_{i \in \HO}$ never has access to the held-out actual assignments $(W_i)_{i \in \HO}$, the resulting test provides valid size control.

\begin{prop}[Valid permutation-based test]
  \label{prop:Test}
  If $(\widehat{W}_i)_{i \in \HO} \perp (W_i)_{i \in \HO} \: | \: (Y_i,W_i)_{i \in \Tr},(Y_i)_{i \in \HO}$,
  then the test based on the permutation $p$-value $\widehat{p}$ provides (conditionally) valid size control in the sense that $\P(\widehat{p} \leq \alpha|(Y_i,W_i)_{i \in \Tr},(Y_i)_{i \in \HO}) \leq \alpha$ under the null hypothesis $Y|W{=}1 \stackrel{d}{=} Y|W{=}0$.
\end{prop}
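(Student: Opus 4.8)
The plan is to derive the proposition from the standard validity argument for Monte-Carlo permutation tests, for which the only substantive input is a conditional exchangeability statement about the held-out labels under the null. Throughout I condition on $\mathcal C:=\big((Y_i,W_i)_{i\in\Tr},(Y_i)_{i\in\HO},(\widehat W_i)_{i\in\HO}\big)$; this $\mathcal C$ determines the split $(\Tr,\HO)$, and since the trivial prediction $\widehat w$ is a function of the training data, both $\Delta$ and each $\Delta_{\pi_b}$ are, given $\mathcal C$, deterministic functions of the held-out label vector $\mathbf W:=(W_i)_{i\in\HO}$. Concretely, writing $g(\mathbf v):=L\big((v_i,\widehat w)_{i\in\HO}\big)-L\big((v_i,\widehat W_i)_{i\in\HO}\big)$ for a binary vector $\mathbf v$ indexed by $\HO$ and letting a permutation act by $\pi\cdot\mathbf v:=(v_{\pi(i)})_{i\in\HO}$, we have $\Delta=g(\mathbf W)$ and $\Delta_{\pi_b}=g(\pi_b\cdot\mathbf W)$, where $\pi_1,\dots,\pi_B$ are iid uniform on $S_{\HO}$ and, being chosen by the analyst, are independent of the data.

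The key step is to show that under the null $\mathbf W$ is exchangeable conditionally on $\mathcal C$, i.e.\ $\sigma\cdot\mathbf W\stackrel{d}{=}\mathbf W$ given $\mathcal C$ for every fixed $\sigma\in S_{\HO}$. Conditionally on the group sizes $(n_1,n_0)$ the label vector $(W_i)_{i=1}^n$ is uniform over binary vectors with $n_1$ ones --- this is the randomized design that our analysis conditions on --- and, since under $H_0$ we have $Y|W{=}1\stackrel{d}{=}Y|W{=}0$, it is independent of the documents $(Y_i)_{i=1}^n$ (Assumption~\ref{asm:Sampling}). By Assumption~\ref{asm:Splitting} the split $(\Tr,\HO)$ depends only on $(W_i)_{i=1}^n$ and an auxiliary uniform randomization. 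Hence, given the split and the training labels $(W_i)_{i\in\Tr}$, the held-out labels $\mathbf W$ are independent of all the documents, and --- by a short symmetry computation exploiting the ``drawn uniformly at random'' clause of Assumption~\ref{asm:Splitting} --- their conditional law is uniform over binary vectors on $\HO$ with exactly $h_1$ ones, hence $S_{\HO}$-invariant. Finally, the hypothesis $(\widehat W_i)_{i\in\HO}\perp(W_i)_{i\in\HO}\mid(Y_i,W_i)_{i\in\Tr},(Y_i)_{i\in\HO}$ says precisely that adjoining $(\widehat W_i)_{i\in\HO}$ to the conditioning does not change the law of $\mathbf W$, so $\mathbf W$ remains exchangeable given $\mathcal C$.

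Given this, the rest is the standard Monte-Carlo permutation argument, conditional on $\mathcal C$: introducing an independent uniform $\rho\in S_{\HO}$ and using $\rho\cdot\mathbf W\stackrel{d}{=}\mathbf W$ together with the independence of $\mathbf W$ from the $\pi_b$ and $\rho$, one checks that $(\Delta,\Delta_{\pi_1},\dots,\Delta_{\pi_B})$ has the same conditional law as a tuple whose $B+1$ entries are $g(\gamma\cdot\mathbf W)$ for an iid uniform family of permutations $\gamma$, and is therefore exchangeable given $\mathcal C$. For any exchangeable real tuple $(Z_0,\dots,Z_B)$ one has $\P\big(1+\#\{b\ge1:Z_b\ge Z_0\}\le k\big)\le k/(B+1)$ for every $k$, because $\#\{j:\#\{b:Z_b\ge Z_j\}\le k\}\le k$ deterministically and, by exchangeability, each index is equally likely to lie in this set. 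Hence
\[
  \P(\widehat p\le\alpha\mid\mathcal C)=\P\big(1+\#\{b:\Delta\le\Delta_{\pi_b}\}\le\alpha(1+B)\ \big|\ \mathcal C\big)\le\frac{\lfloor\alpha(1+B)\rfloor}{1+B}\le\alpha,
\]
and averaging this conditional bound over $(\widehat W_i)_{i\in\HO}$ with $(Y_i,W_i)_{i\in\Tr}$ and $(Y_i)_{i\in\HO}$ held fixed yields $\P\big(\widehat p\le\alpha\mid(Y_i,W_i)_{i\in\Tr},(Y_i)_{i\in\HO}\big)\le\alpha$.

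The one genuinely delicate point is the exchangeability claim of the second paragraph --- specifically that conditioning on the train/hold-out split, which by Assumption~\ref{asm:Splitting} is itself built from the labels, does not destroy exchangeability of $\mathbf W$; this is exactly where the uniform-at-random sampling of $\HO$ enters, through the small symmetry computation noted there, and it is not automatic. Everything downstream --- invoking the conditional-independence hypothesis and passing from exchangeability to validity of the Monte-Carlo $p$-value --- is entirely standard.
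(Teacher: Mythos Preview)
Your proposal is correct and follows essentially the same route as the paper's own proof: establish that, under the null, the held-out labels are permutation-invariant given the conditioning information (so that $(\Delta,\Delta_{\pi_1},\ldots,\Delta_{\pi_B})$ is exchangeable), and then apply the standard rank bound $\P(\widehat p\le\alpha)\le\lfloor(1+B)\alpha\rfloor/(1+B)$. The paper argues the first step by directly asserting that $(\widehat W_i,W_{\pi(i)})_{i\in\HO}\mid(Y_i,W_i)_{i\in\Tr},(Y_i)_{i\in\HO}$ is invariant to $\pi$ and then composing with an extra uniform permutation $\pi_0$; your version conditions additionally on $(\widehat W_i)_{i\in\HO}$, isolates the exchangeability of $\mathbf W$ explicitly (flagging the role of Assumption~\ref{asm:Splitting}), and averages out at the end --- a cosmetic rearrangement rather than a different idea.
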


This test has three advantages in our setting. First, it is valid without further assumptions on the algorithm that produces the predictions. This includes LLMs, provided we can guarantee that the held-out assignments $(W_i)_{i \in \HO}$ never entered the training process (which includes the LLM training data as well as any fine-tuning exercise). Second, the test does not require the repeated computation of predictions, so it is computationally efficient even when we use very complex algorithms to obtain predictions $(\widehat{W}_i)_{i \in \HO}$. Third, the test is valid even when the number of documents is small.
At the same time, the test remains inherently limited since it does not provide a clear indication of what the differences across groups are -- it only provides evidence that some differences exist.

\section{Describing Effects on Text}
\label{sec:Describing}

Above, we have proposed an approach to testing \emph{whether} two distributions over texts are systematically different. We now tackle the challenge of describing \emph{what} the systematic differences across the distributions $Y|W{=}1$ and $Y|W{=}0$ are.

We note that the task of describing systematic differences is challenging for at least two reasons. First, the space of potential differences is amorphous and huge, and even (or especially) a human analyst would struggle searching through it systematically. Second, unlike the approach in the previous section that relied on a statistical test of differences across texts, making substantive claims about which outcomes are affected relies on a definition of economic outcomes that is hard to formalize. While the former challenge points to the massive potential of using machine-learning tools that search through complex data and hypotheses, the latter challenge also foreshadows their inherent limitations.

We first define what we mean by describing differences across text distributions, leading to the notion of causal themes (\autoref{subsec:Describing-Themes}). We then propose an LLM-based implementation to hypothesize such themes (\autoref{subsec:Describing-Implementation}), before discussing their validation (Sections~\ref{subsec:Describing-Validation} and \ref{subsec:Describing-Combining}). We close this section by emphasizing the role of human analysts beyond scoring (\autoref{subsec:Describing-Intervention}).

\subsection{Causal Themes}
\label{subsec:Describing-Themes}

To tackle the challenge of describing systematic differences in complex outcomes across different groups, we consider a simple description vector $Y^f = f(Y) \in \mathcal{D}$ that includes core aspects of documents across the two groups.
Specifically, we score texts according to $k$ core themes, where $k$ is a small number. Each of these themes may correspond to a specific topic covered in a document, to its style, or to the sentiment it expresses. For each topic $j$, we obtain a score $Y_j^f \in \mathcal{D}_j$ within a simple set $\mathcal{D}_j$. That simple set will typically be a continuous or discrete scale, such as when the prevalence of a topic is scored from 0 to 100 or sentiment as $\{-1, 0, +1\}$.
Or it can be a discrete classification into a few groups, for example whether the poetic form of a document is a sonnet, a haiku, or neither ($\mathcal{D}_j = \{ \textnormal{sonnet}, \textnormal{haiku}, \textnormal{other} \}$).
In any case, we assume that descriptions are easily understood and that we can use standard statistical tools to estimate causal effects on the low-dimensional vector $(Y^f_1,\cdots,Y^f_k) = Y^f = f(Y) \in \mathcal{D} = \mathcal{D}_1 \times \cdots \times \mathcal{D}_k$.
By using these descriptions, we turn the task of considering the intractable effect of $W$ on full documents $Y$ into describing differences between scores $Y^f_j |W =1$ and $Y^f_j |W =0$.

An optimal scoring $Y^f = f(Y)$ that summarizes a complex document $Y$ in terms of core themes should provide mutually exclusive, differentially complete, and causally meaningful descriptions of the text. By \emph{mutually exclusive}, we mean that components $Y_j^f$ describe themes that do not overlap in meaning and are entirely separate.
By \emph{differentially complete}, we specifically refer to the ability of the given description to capture the systematic differences across the two groups as well as possible. (In this sense, a good set of \emph{causal} themes differs markedly from a complete description of the text, which instead may include features that are \emph{common} across groups.) Finally, by \emph{causally meaningful}, we mean that the aspects of text captured by the description are descriptions of plausible outcomes described in the text. This latter criterion is meant to rule out cases where descriptions do not have a clear relationship to the content of the text or represent aspects that are unhelpful in describing effects of interest.
These criteria represent an ideal that may be infeasible in practice since simple scorings of complex documents may inherently lose relevant information and it may not be feasible to formalize what makes a description accurate or meaningful.

\subsection{LLM-Based Implementation}
\label{subsec:Describing-Implementation}

How can we construct simple summaries of documents that represent systematic differences between treatment and control groups?
One approach would be to rely on pre-specified scoring functions applied to the documents.
For example, we could classify documents based on the counts of specific words, use a generic topic model trained on some other text data, or apply algorithms for sentiment analysis.
However, the use of pre-specified classifications is limited in at least three ways.
First, such scores may not be a good fit for the specific documents in our experiment.
Second, even if they are good at classifying documents overall, they may not represent the systematic differences across groups well.
Finally, pre-specified ways of classifying our documents may fall short of our goal of expressing flexibly what the (possibly unexpected) differences across documents are, since they may reflect our preconceptions more than the data.

Instead of relying on pre-specified classifications of documents, we learn plausible causal themes from the data itself, similar to \cite{pmlr-v162-zhong22a,zhong2023goaldrivendiscoverydistributional}.
That is, rather than pre-specifying themes and the associated scoring function $f$, which maps text $Y$ to descriptions $Y^f$, we learn themes directly from the training data $(Y_i,W_i)_{i \in \Tr}$. We do not take the stance that there is a unique, true set of themes that we aim to learn.  Instead, we hope to obtain \emph{some} suitable description of documents, and will later provide valid statistical inference based on scoring additional documents $(Y_i,W_i)_{i \in \HO}$ in the hold-out sample. In fact, having flexibility with respect to the technology that proposes themes is a feature of our approach, and our inference framework still delivers robust causal statements regardless of which representation is used.

For our implementation, we feed into a large-language model (LLM) the documents in the training data, along with their classification into treatment and control groups.
From this LLM, we obtain themes that represent systematic \emph{differences} between treatment and control documents in terms of core themes that differentiate them (similar to a D5 system, \citealp{zhong2023goaldrivendiscoverydistributional}, and earlier work in \citealp{pmlr-v162-zhong22a}).
Practically, we can exploit very large context windows of modern LLMs (e.g.\ more than one million tokens in Google Gemini 1.5 Pro) to obtain such themes by prompting, where we request the LLM to provide descriptions of systematic differences across groups and to provide a list of themes and corresponding scales, along with examples. Details on our specific settings for the LLM and prompts are provided in \autoref{apx:Implementation}, and \autoref{sec:Empirical} provides a worked-out empirical example.

While a prompting-based approach is straightforward to implement using publicly available LLMs,
there are natural extensions that are likely to further align the model with the task of finding causal themes at the cost of additional computational and/or human effort.
First, providing more detailed examples of successful scorings according to causal themes has the potential to improve the quality and reliability of output. For example, \cite{Zhou2023LIMALI} shows that feeding a language model a curated set of high-quality prompts and responses, even in a limited quantity, can significantly enhance its performance by aligning it more closely with the specific task requirements.
As an additional limitation, next-token prediction is at best a crude proxy for aligning the LLM with being able to find classifications that distinguish well between treatment and control groups, and adding explicit rewards to the optimization for finding themes that distinguish well between groups may further improve the LLM's ability to find differentially complete summaries.
An important criterion for successful causal themes is not only their ability to distinguish between groups statistically but also whether these themes correspond to meaningful economic outcomes.
Since that question is ultimately one of communication to human researchers, further aligning causal theme classifications based on reinforcement learning from human feedback (RLHF) could be helpful in ensuring that the LLM provides effective descriptions of meaningful outcomes.

\subsection{Alternative Approaches Based on Topic Models}
\label{subsec:Describing-Alternatives}

Our approach can be related to supervised topic modeling, which provides an alternative implementation. Topic models like those based on Latent Dirichlet Allocation \citep[LDA;][]{blei2003latent} estimate common topics across documents, where each topic corresponds to a distribution over words and each document is assumed to come from a mixture of topics.
Since we are interested in those topics that are related to differences across groups, we relate more directly to \emph{supervised} topic models \citep{mcauliffe2007supervised} that implement a version of LDA in which topics are also optimized for predicting an outcome of interest (here, the assignment to groups).
A main downside of (supervised or unsupervised) topic models is that their performance may be unreliable at low sample sizes.
We document such behavior in our empirical evaluation in \autoref{sec:Empirical} below.
Relative to such an implementation based on supervised topic models, leveraging readily available LLMs simplifies our technical implementation and still provides reliable results when the number of documents is low.
The LLM approach may also capture themes beyond simple distributions over words, which allows us to integrate aspects of sentiment or more complex distributions over text. At the same time, this more vaguely defined concept of a ``theme'' may be less mathematically tractable than the narrower notion of a ``topic'' in LDA models.

\subsection{Human-Based Validation}
\label{subsec:Describing-Validation}

Having obtained themes from the training data, our goal is now to obtain valid statistical inference on the differences in scores across the two groups in the held-out data. In doing so, we face a challenge: for the themes obtained from the training data, we need to obtain scores for the held-out documents. Once we have these scores, we can then obtain inference on their differences across treatment and control using standard statistical tools (like the permutation test from \autoref{sec:Testing}), where our sample-splitting procedure \citep[which follows][]{egami2022make} ensures that our inference remains statistically valid.

A first approach to obtaining score labels would be to use a machine-learning tool to provide scores for the held-out text documents $Y_i$, while withholding their group assignments $W_i$. In our implementation, this could be achieved by prompting the LLM to provide scores $\widehat{Y}^{\hat{f}}_{ij}$ for each held-out document $i \in \HO$ and each theme $j \in \{1,\ldots,k\}$. These score labels $\widehat{Y}^{\hat{f}}_{ij}$ then estimate the true scores $Y^{\hat{f}}_{ij} = \hat{f}_{j}(Y_i)$ corresponding to the theme $\hat{f}_j$ suggested by the LLM based on training data. For example, the LLM may suggest ``text talks about cars'' as a theme $\hat{f}_j$ with a classification into $\widehat{\mathcal{D}}_j = \{\text{yes},\text{no}\}$. The classification $\widehat{Y}^{\hat{f}}_{ij}$ then corresponds to the LLM's claim whether a specific text $Y_i$ talks about cars or not.

Is sample splitting enough to make valid inference on differences in theme scores across text distributions based on LLM-generated labels? In general, the answer is clearly \emph{no}. While sample splitting allows us to provide statistical inferences for differences in machine scores $\widehat{Y}^{\hat{f}}_j$ across the two groups, it does not provide any guarantee that these inferences are valid for the true scores $Y^{\hat{f}}_j$. For example, an LLM may label any text that includes a vehicle as a ``text that talks about cars'', even if that vehicle is a boat rather than a car -- a phenomenon we call \emph{theme score bias}.
In this case, we could obtain \emph{statistically} valid inference that there must be a difference across the two groups, even if that difference is about boats rather than cars. But crucially, we do not obtain \emph{substantively} valid inference for the claim that the difference is related to cars.
This problem persists even if we know that the LLM can score text well on average based on some training distribution, since average scoring performance does not guarantee valid inference conditional on the chosen theme ${\hat{f}}_j$. In other words, the kind of guarantee we would need for being able to make substantively correct (causal) inferences based on LLM-generated scores is at odds with those typically available for such machine-learning tools, and would at least require \emph{some} true scores $Y^{\hat{f}}_{ij}$ for verification (an idea we expand on in \autoref{subsec:Describing-Combining} below). In the parlance of \cite{llmapplied}, we are applying an LLM to an estimation problem for which machine-generated scores are not sufficient to obtain valid inference.

Instead of relying on machine-learning scores for the held-out documents, we therefore take the stance that providing scores for the final validation should ultimately be left to human experts. That is, we assume that the themes provided by the procedure in \autoref{subsec:Describing-Implementation} can be scored correctly by experts, yielding true scores $Y^{\hat{f}}_{ij}$. Of course, such scoring may be costly, but we argue that it cannot be avoided if our goal is to make valid inference on these themes. Furthermore, the cost appears small relative to two counterfactual procedures that do not leverage machine learning: first, a procedure in which we generate costly scores for a small number of pre-specified themes, risking that we waste costly scores on aspects of the text that may only have a small chance of being affected; or second, a procedure that generates costly scores for a very large number of themes first and then inspects effects on all of these themes. In the first case, we spend a similar amount of effort on generating scores, but risk achieving much smaller power; in the second case, we spend a considerably larger cost on scoring without a clear gain in yield.

If we have true (human-generated) scores for the proposed (machine-generated) themes available, and none of the hold-out data entered the construction of the themes, then inference on differences across the two groups is straightforward. For example, as long as the hold-out sample is sufficiently large, we can obtain conditionally valid inference based on a central limit theorem for the score differences across the two groups.

\begin{prop}[Valid inference on themes]
  \label{prop:Inference}
  Assume that all the scores $Y^{\hat{f}}_j \in \R^k$ are uniformly bounded,
  that $\Var(Y^{\hat{f}}|\hat{f})$ is a.s.\ uniformly bounded away from zero,%
  \footnote{
    For the general multi-dimensional case, by ``uniformly bounded away from zero'' we refer to the smallest eingenvalue of $\Var(Y^{\hat{f}}|\hat{f})$, that is, $\Var(Y^{\hat{f}}|\hat{f}) \succeq \I \varepsilon$ for some universal $\varepsilon > 0$.
  }
  and that $\hat{f} \perp (Y_i,W_i)_{i \in \HO} \: | \: (Y_i,W_i)_{i \in \Tr}$.
  Define
  \(
    \tau = \E[Y^{\hat{f}}|W{=}1,\hat{f}] - \E[Y^{\hat{f}}|W{=}0,\hat{f}]
  \)
  and consider the estimator
  \[
    \widehat{\tau} = 
    \frac{1}{h_1} \sum_{i \in \HO; W_i = 1} Y^{\hat{f}}_i - \frac{1}{h_0} \sum_{i \in \HO; W_i = 0} Y^{\hat{f}}_i.
  \]
  Then 
  \begin{align*}
    \E[\widehat{\tau}|\hat{f}] &=  \tau
    \text{ a.s.}
    &
    &\text{and}
    &
    \sup_{t \in \R^k}
    \left|
    \P\left(
      \sqrt{\underline{h}} (\widehat{\tau} - \tau)
      \leq t
    \middle|
      \hat{f}
    \right)
    -
    \P\left(
      \N(\0_k,\Sigma)
      \leq
      t
    \middle|
      \hat{f}
    \right)
    \right|
    &\stackrel{p}{\longrightarrow}
    0
  \end{align*}
  as $\underline{h} = \min(h_1,h_0) \rightarrow \infty$,
  where
  $\Sigma = \sfrac{\underline{h}}{h_1} \Var(\hat{f}(Y)|W{=}1,\hat{f}) + \sfrac{\underline{h}}{h_0} \Var(\hat{f}(Y)|W{=}0,\hat{f})$.
\end{prop}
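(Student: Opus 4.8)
The plan is to condition on the (training-based) theme map $\hat f$, reduce the hold-out sample to an honest two-sample problem with bounded i.i.d.\ vectors, and finish with a multivariate Berry--Esseen bound whose constant is controlled uniformly in $\hat f$ by the two regularity hypotheses. \emph{Step 1 (reduction).} Conditionally on the split into $\Tr$ and $\HO$, Assumptions~\ref{asm:Sampling} and~\ref{asm:Splitting} make $(Y_i,W_i)_{i\in\HO}$ independent of $(Y_i,W_i)_{i\in\Tr}$: within each arm the hold-out indices form a disjoint sub-collection of i.i.d.\ draws, and the two arms are independent. Combining this with the hypothesis $\hat f\perp(Y_i,W_i)_{i\in\HO}\mid(Y_i,W_i)_{i\in\Tr}$ via the contraction property of conditional independence (if $A\perp B$ and $C\perp B\mid A$ then $(A,C)\perp B$) yields $\hat f\perp(Y_i,W_i)_{i\in\HO}$. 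Hence, on $\{\hat f=f\}$, the scores $\{f(Y_i)\}_{i\in\HO,\,W_i=1}$ are i.i.d.\ with the law of $f(Y)\mid W{=}1$, the scores $\{f(Y_i)\}_{i\in\HO,\,W_i=0}$ are i.i.d.\ with the law of $f(Y)\mid W{=}0$, these two collections are independent, and all entries are bounded by the uniform bound $M$. Taking conditional expectations then gives $\E[\widehat\tau\mid\hat f]=\E[Y^{\hat f}\mid W{=}1,\hat f]-\E[Y^{\hat f}\mid W{=}0,\hat f]=\tau$ almost surely, which is the first assertion.

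\emph{Step 2 (sum of independent vectors and Berry--Esseen).} On $\{\hat f=f\}$, writing $\mu_w(f)=\E[f(Y)\mid W{=}w]$,
\[
  \sqrt{\underline h}\,(\widehat\tau-\tau)=\sum_{i\in\HO}c_i,\qquad
  c_i=\begin{cases}\dfrac{\sqrt{\underline h}}{h_1}\bigl(f(Y_i)-\mu_1(f)\bigr),&W_i=1,\\[10pt]-\dfrac{\sqrt{\underline h}}{h_0}\bigl(f(Y_i)-\mu_0(f)\bigr),&W_i=0,\end{cases}
\]
a sum of $h=h_1+h_0$ conditionally independent, mean-zero, bounded random vectors with total covariance exactly $\Sigma(f)=\tfrac{\underline h}{h_1}\Var(f(Y)\mid W{=}1)+\tfrac{\underline h}{h_0}\Var(f(Y)\mid W{=}0)$, matching the $\Sigma$ in the statement. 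A multivariate Berry--Esseen bound for sums of independent random vectors, applied to the lower-orthant sets $\{x\le t\}$ (a subclass of convex sets), gives
\[
  \sup_{t\in\R^k}\Bigl|\P\bigl(\textstyle\sum_{i\in\HO}c_i\le t\mid\hat f=f\bigr)-\P\bigl(\N(\0_k,\Sigma(f))\le t\bigr)\Bigr|\;\le\;Ck^{1/4}\sum_{i\in\HO}\E\bigl\|\Sigma(f)^{-1/2}c_i\bigr\|^3 .
\]

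\emph{Step 3 (uniform control of the remainder).} Since $\underline h=\min(h_1,h_0)$, whichever arm has the smaller hold-out size enters $\Sigma(f)$ with weight one, so $\Sigma(f)\succeq\varepsilon\,\I$ by the non-degeneracy hypothesis (read, as needed, at the arm-conditional level) and $\|\Sigma(f)^{-1/2}\|_{\mathrm{op}}\le\varepsilon^{-1/2}$; uniform boundedness gives $\|c_i\|\le D\sqrt{\underline h}/h_{W_i}$ with $D=2M\sqrt k$. Hence
\[
  \sum_{i\in\HO}\E\bigl\|\Sigma(f)^{-1/2}c_i\bigr\|^3\le\varepsilon^{-3/2}D^3\Bigl(\frac{\underline h^{3/2}}{h_1^{2}}+\frac{\underline h^{3/2}}{h_0^{2}}\Bigr)\le 2\,\varepsilon^{-3/2}D^3\,\underline h^{-1/2},
\]
a deterministic bound not depending on $f$. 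The supremum in Step~2 is therefore bounded above, uniformly over realizations of $\hat f$, by a sequence tending to $0$ as $\underline h\to\infty$ (which forces $h_1,h_0\to\infty$), so it converges to $0$ in probability, giving the second assertion.

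\emph{Main obstacle.} The two delicate points are the conditioning bookkeeping in Step~1 — one must use the sample split to guarantee that freezing $\hat f$ really leaves an honest i.i.d.\ two-sample problem on $\HO$ — and, in Steps~2--3, ensuring the Berry--Esseen constant does not depend on the random $\hat f$: this is precisely what uniform boundedness and uniform non-degeneracy buy, and it is also why the possibly unbalanced weights $\underline h/h_w$ appearing in $\Sigma$ must be handled by noting that the min-size arm always carries weight one, so that $\Sigma(\hat f)$ stays uniformly non-degenerate and its inverse-square-root stays uniformly bounded.
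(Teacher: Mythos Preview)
Your proof is correct and follows essentially the same strategy as the paper: write $\sqrt{\underline h}(\widehat\tau-\tau)$ as a sum of conditionally independent, centered, bounded increments over $\HO$, and apply a Berry--Esseen inequality whose right-hand side is controlled uniformly in $\hat f$ via the boundedness and non-degeneracy hypotheses, yielding a deterministic $O(\underline h^{-1/2})$ bound. The one notable difference is in how the multivariate case is handled: the paper first establishes the $k=1$ case via a scalar conditional Berry--Esseen lemma (its Lemma~\ref{lem:conditionalclt}, applied after also conditioning on $(W_i)_{i\in\HO}$) and then invokes the Cram\'er--Wold device, whereas you apply a multivariate Berry--Esseen bound over convex sets directly. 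Your route is arguably cleaner, since it delivers the uniform-in-$t$ conclusion in one step rather than passing through weak convergence; conversely, the paper's route avoids citing the heavier multivariate Berry--Esseen machinery. Your Step~1 is also more explicit than the paper's about why the sample split plus the assumed conditional independence yield an honest two-sample problem on $\HO$; the paper treats unbiasedness as ``immediate'' and proceeds. Both proofs rely on the arm-level variance lower bound in the same (slightly informal) way you flag in your parenthetical.
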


This result justifies constructing standard-error estimates, confidence intervals, and tests by estimating the variance as
\begin{align*}
  \widehat{\Var}(\widehat{\tau}|\hat{f}) = \frac{1}{h_1} \widehat{\Var}(\hat{f}(Y)|W{=}1,\hat{f}) + \frac{1}{h_0} \widehat{\Var}(\hat{f}(Y)|W{=}0,\hat{f})
\end{align*}
on the hold-out,
or by bootstrap inference based on resampling the hold-out sample only (holding $h_1,h_0$ fixed). In both cases, our conditional-inference approach means that we do not have to re-derive themes themselves, so no additional computation is necessary beyond straightforward calculations on the hold-out. The resulting inference is valid conditional on the scoring function $\hat{f}$ and does not require any guarantees about how topics were constructed. Instead, this approach only relies on sampling and sample splitting as a guarantee (Assumptions~\ref{asm:Sampling} and \ref{asm:Splitting}).

\subsection{Combining Cheap and Costly Scores}
\label{subsec:Describing-Combining}

Our method analyzes causal effects based on text data by summarizing text across treatment and control groups in terms of core themes.
While these summaries may be suggestive of causal effects,
we argue above that typical AI systems like the LLM we deploy in our empirical example in \autoref{sec:Empirical} do not come with guarantees about their validity, necessitating the use of human-generated scores.

In practice, scoring all hold-out documents may be costly. An alternative approach, going back to at least \cite{bound1991extent,pepe1992inference,robins1994estimation} and recently popularized in machine learning as ``prediction-powered inference'' (\citealp{angelopoulos2023prediction}; see \citealp{{ji2025predictions}} for a recent survey), is to combine costly (human-generated) true scores with cheap (machine-generated) approximations to correct for biases in the latter.
In our context, a natural way of combining true scores $Y^{\hat{f}}$ with machine-generated ones $\widehat{Y}^{\hat{f}}$ is the estimator
\begin{align}
  \label{eqn:Combined}
  \widehat{\tau}^\dagger
  =
  \frac{1}{h_1} \sum_{i \in \HO; W_i=1} \widehat{Y}^{\hat{f}}_i
  -
  \frac{1}{h_0} \sum_{i \in \HO; W_i=0} \widehat{Y}^{\hat{f}}_i
  -
  \left(\frac{1}{\ell_1} \sum_{i \in \Lb; W_i=1} (\widehat{Y}^{\hat{f}}_i {-} Y^{\hat{f}}_i)
  -
  \frac{1}{\ell_0} \sum_{i \in \Lb; W_i=0} (\widehat{Y}^{\hat{f}}_i {-} Y^{\hat{f}}_i)\right)
\end{align}
where $\Lb \subseteq \HO$ is a subset of the hold-out sample for which true scores $Y^{\hat{f}}_i$ are available.
Here, $\Lb$ is chosen uniformly at random among subset 
of $\HO$ of size $\ell = \ell_1 + \ell_0$ with $\ell_1>0$ units with $W_i{=}1$ $\ell_0>0$ units with $W_i{=}0$.
The estimator in \eqref{eqn:Combined} has an intuitive form, similar to the doubly-robust semi-parametric estimators from \cite{robins1994estimation} and \cite{chernozhukov_doubledebiased_2018}:
The first part can be thought of as a potentially biased estimator based on the machine-generated scores.
The second part estimates and corrects for the bias of the first part.
As a result, the full estimator is exactly unbiased (conditional on themes) for the true expected difference $\tau$ across groups.
This estimator is a special case of the bias-corrected estimator in \cite{angelopoulos2023prediction} and \cite{egami2024using}, and is also proposed by \cite{llmapplied} for the combination of human and LLM labels in the context of estimation problems.

If machine-generated scores closely resemble those by human experts, then the bias-correction term in \eqref{eqn:Combined} is small, and the estimator performs similarly to an estimator that uses true scores from the full sample. In this case, adding human-generated scores can improve precision. But the estimator is still unbiased and allows for valid inference even if the machine-generated scores are systematically off, without any substantive assumption on how the machine scores are generated.

\begin{prop}[Valid inference with combined scores]
  \label{prop:Combining}
  Assume that both true scores $Y_{ij}^{\hat{f}} $ and machine-estimated scores $\widehat{Y}_{ij}^{\hat{f}}$ are real-valued and uniformly bounded.
  If $Y_{ij}^{\hat{f}} = \hat{f}_j(Y_i)$ and $\widehat{Y}_{ij}^{\hat{f}} = \hat{f}^\dagger_j(Y_i)$ with
  $\hat{f}, \hat{f}^\dagger \perp (Y_i,W_i,L_i)_{i \in \HO} \: | \: (Y_i,W_i)_{i \in \Tr}$ and $\Var(Y^{\hat{f}}|W{=}1,\hat{f})$, $\Var(\widehat{Y}^{\hat{f}}-Y^{\hat{f}}|W{=}1,\hat{f},\hat{f}^\dagger)$ are both uniformly bounded from below., then 
  \begin{align*}
    \E[\widehat{\tau}^\dagger|\hat{f},\hat{f}^\dagger] &=  \tau
    \text{ a.s.}
    &
    &\text{and}
    &
    \sup_{t \in \R^k}
    \left|
    \P\left(
      \sqrt{\underline{\ell}} (\widehat{\tau}^\dagger - \tau)
      \leq t
    \middle|
      \hat{f}, \hat{f}^\dagger
    \right)
    -
    \P\left(
      \N(\0_k,\Sigma^\dagger)
      \leq
      t
    \middle|
      \hat{f}, \hat{f}^\dagger
    \right)
    \right|
    &\stackrel{p}{\longrightarrow}
    0
  \end{align*}
  as $\underline{\ell} = min(\ell_1,\ell_0) \rightarrow \infty$,
  where
  \begin{align*}
    \Sigma^\dagger &= \sfrac{\underline{\ell}}{\ell_1} 
    \left(\sfrac{\ell_1}{h_1}
  \Var(\hat{f}(Y)|W{=}1,\hat{f})
  +
  \left(1 - \sfrac{\ell_1}{h_1}\right) \Var(\hat{f}^\dagger(Y) - \hat{f}(Y)|W{=}1,\hat{f},\hat{f}^\dagger)\right)
  \\
  &\peq
  +
  \sfrac{\underline{\ell}}{\ell_0} 
  \left(\sfrac{\ell_0}{h_0}
  \Var(\hat{f}(Y)|W{=}0,\hat{f})
  +
  \left(1 - \sfrac{\ell_0}{h_0}\right) \Var(\hat{f}^\dagger(Y) - \hat{f}(Y)|W{=}0,\hat{f},\hat{f}^\dagger)\right).
  \end{align*}
\end{prop}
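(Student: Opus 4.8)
The plan is to work conditionally on $(\hat{f},\hat{f}^\dagger)$ throughout and reduce the statement to two conditionally independent one-sample problems, one per group, each of which further decomposes into an iid average and a finite-population sampling error. First I would pin down the conditional law of the hold-out data: since the split is held fixed, $(Y_i,W_i)_{i\in\HO}$ is independent of $(Y_i,W_i)_{i\in\Tr}$ under Assumptions~\ref{asm:Sampling}--\ref{asm:Splitting}, and $\Lb$ is by construction an independent uniformly random subset of $\HO$ with $\ell_1$ treated and $\ell_0$ control units; combined with $\hat{f},\hat{f}^\dagger\perp(Y_i,W_i,L_i)_{i\in\HO}\mid(Y_i,W_i)_{i\in\Tr}$, a short iterated-expectations argument shows that, conditionally on $(\hat{f},\hat{f}^\dagger)$, the treated hold-out pairs $(\hat{f}(Y_i),\hat{f}^\dagger(Y_i))$ are iid from the law of $(\hat{f}(Y),\hat{f}^\dagger(Y))\mid W{=}1$, the control pairs iid from the corresponding law given $W{=}0$, the two groups independent, and $\Lb$ still an independent uniform subsample. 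Writing $D_i=\widehat{Y}^{\hat{f}}_i-Y^{\hat{f}}_i=\hat{f}^\dagger(Y_i)-\hat{f}(Y_i)$ and $\mu_w=\E[\hat{f}(Y)\mid W{=}w,\hat{f}]$ (so $\tau=\mu_1-\mu_0$), I would decompose $\widehat{\tau}^\dagger-\tau=(A_1-\mu_1)-(A_0-\mu_0)$ with $A_w=\tfrac1{h_w}\sum_{i\in\HO,W_i=w}\widehat{Y}^{\hat{f}}_i-\tfrac1{\ell_w}\sum_{i\in\Lb,W_i=w}D_i$ and $A_1\perp A_0$ conditionally, and then split $A_w-\mu_w=B_w+C_w$ where $B_w=\tfrac1{h_w}\sum_{i\in\HO,W_i=w}\hat{f}(Y_i)-\mu_w$ is an iid average and $C_w=\tfrac1{h_w}\sum_{i\in\HO,W_i=w}D_i-\tfrac1{\ell_w}\sum_{i\in\Lb,W_i=w}D_i$ a finite-population sampling error.

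Unbiasedness follows immediately: conditionally on the hold-out data, $\{i\in\Lb:W_i=w\}$ is a simple random sample without replacement of size $\ell_w$ from the $h_w$ group-$w$ hold-out units, so $\E[C_w\mid(Y_i,W_i)_{i\in\HO},\hat{f},\hat{f}^\dagger]=0$; hence $\E[A_w\mid\hat{f},\hat{f}^\dagger]=\mu_w$ and $\E[\widehat{\tau}^\dagger\mid\hat{f},\hat{f}^\dagger]=\tau$ a.s. The same identity makes $B_w$ and $C_w$ conditionally uncorrelated, and combined with the iid variance formula for $B_w$ and the sampling-without-replacement variance $\Var(\sqrt{\ell_w}\,C_w\mid(Y_i,W_i)_{i\in\HO})=(1-\ell_w/h_w)\,\widehat{V}_w$ — with $\widehat{V}_w$ the sample covariance of the $D_i$ over the $h_w$ group-$w$ hold-out units (normalized by $h_w-1$, so $\E[\widehat{V}_w\mid\hat{f},\hat{f}^\dagger]=\Var(\hat{f}^\dagger(Y)-\hat{f}(Y)\mid W{=}w,\hat{f},\hat{f}^\dagger)$) — one checks that $\Var(\sqrt{\underline{\ell}}\,(A_w-\mu_w)\mid\hat{f},\hat{f}^\dagger)$ equals exactly the $w$-summand $\Sigma^\dagger_w$ of $\Sigma^\dagger$. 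So only the distributional \emph{shape}, not the variance, is asymptotic.

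The substantive step is a conditional CLT for $\sqrt{\underline{\ell}}\,(A_w-\mu_w)$, obtained from two ingredients glued by characteristic functions. First, the triangular-array Lindeberg CLT applies to $\sqrt{h_w}\,B_w$ because $\hat{f}(Y_i)$ are iid, uniformly bounded, with $\Var(\hat{f}(Y)\mid W{=}w,\hat{f})$ bounded below. Second, a finite-population (H\'ajek-type) CLT applies to $\sqrt{\ell_w}\,C_w$ \emph{conditionally on the hold-out data}, since the $D_i$ are uniformly bounded and $\widehat{V}_w\to\Var(\hat{f}^\dagger(Y)-\hat{f}(Y)\mid W{=}w,\hat{f},\hat{f}^\dagger)$ in probability (itself bounded below), which after Cram\'er--Wold gives the needed negligibility condition $\max_i(c^\top(D_i-\bar D))^2/\sum_i(c^\top(D_i-\bar D))^2=O_p(1/h_w)$; the boundary case in which $h_w-\ell_w$ stays bounded is harmless, since then $\sqrt{\ell_w}\,C_w\to0$ in probability and its variance contribution vanishes. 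To combine them, condition on the hold-out data: $\E[\exp(\mathrm{i}t^\top\sqrt{\underline{\ell}}\,C_w)\mid(Y_i,W_i)_{i\in\HO}]=\exp\big(-\tfrac12\,t^\top\tfrac{\underline{\ell}}{\ell_w}(1-\ell_w/h_w)\widehat{V}_w\,t\big)+o_p(1)$, which is hold-out measurable; multiplying by $\exp(\mathrm{i}t^\top\sqrt{\underline{\ell}}\,B_w)$, taking expectations, and using bounded convergence together with the CLT for $B_w$ and the concentration of $\widehat{V}_w$ yields $\E[\exp(\mathrm{i}t^\top\sqrt{\underline{\ell}}\,(A_w-\mu_w))\mid\hat{f},\hat{f}^\dagger]\to\exp(-\tfrac12 t^\top\Sigma^\dagger_w t)$. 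By conditional independence of $A_1$ and $A_0$, the characteristic function of $\sqrt{\underline{\ell}}\,(\widehat{\tau}^\dagger-\tau)$ is the product of the two and converges to $\exp(-\tfrac12 t^\top\Sigma^\dagger t)$.

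To upgrade this to the stated Kolmogorov-distance conclusion, I would note that $\Sigma^\dagger$ stays in a fixed compact set of positive-definite matrices: it is bounded above because all scores are uniformly bounded, and bounded below by $\varepsilon\I$ because the block with coefficient one (the group with $\ell_w=\underline{\ell}$) is a convex combination of $\Var(\hat{f}(Y)\mid W{=}w,\hat{f})$ and $\Var(\hat{f}^\dagger(Y)-\hat{f}(Y)\mid W{=}w,\hat{f},\hat{f}^\dagger)$, each $\succeq\varepsilon\I$ by hypothesis. On such a set the Gaussian CDFs are equicontinuous, so characteristic-function convergence plus a multivariate Polya/Esseen smoothing argument gives $\sup_{t\in\R^k}\lvert\P(\sqrt{\underline{\ell}}(\widehat{\tau}^\dagger-\tau)\le t\mid\hat{f},\hat{f}^\dagger)-\P(\N(\0_k,\Sigma^\dagger)\le t\mid\hat{f},\hat{f}^\dagger)\rvert\stackrel{p}{\longrightarrow}0$, the convergence being in probability (indeed for a.e.\ realization) since the assumed bounds are uniform in $(\hat{f},\hat{f}^\dagger)$. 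I expect the main obstacle to be the \emph{conditional} finite-population CLT and its interface with the iid term: one must run a combinatorial/sampling CLT whose limiting variance is the random, hold-out-measurable matrix $\widehat{V}_w$ and which degenerates gracefully as $\ell_w/h_w\to1$, and verify that the cross-covariance between $B_w$ and $C_w$ is asymptotically negligible so that the two contributions add into the block structure of $\Sigma^\dagger$. Everything else — unbiasedness, the CLT for $B_w$, and the weak-to-uniform passage — mirrors the proof of Proposition~\ref{prop:Inference}, the genuinely new input being this finite-population limit together with the bookkeeping that the asymptotics are driven by $\underline{\ell}\to\infty$ alone, with all sampling-fraction ratios left as arbitrary bounded sequences.
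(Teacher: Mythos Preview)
Your proposal is correct, but it takes a genuinely different and more elaborate route than the paper. The paper's key simplification is to condition not only on $(\hat{f},\hat{f}^\dagger)$ but \emph{also} on $(W_i,L_i)_{i\in\HO}$. Because the counts $h_1,h_0,\ell_1,\ell_0$ are held fixed throughout, this extra conditioning removes all randomness from the labeling set $\Lb$, and the only remaining randomness is the iid $Y_i$'s. The estimator then rewrites as a sum of conditionally independent, mean-zero increments $X_i$ (four types, indexed by $(W_i,L_i)$), and a \emph{single} application of the Berry--Esseen bound (Lemma~\ref{lem:conditionalclt}) delivers the Kolmogorov-distance statement directly with an explicit $O(\underline{\ell}^{-1/2})$ rate; the algebra $\sum_i\Var(X_i\mid\mathcal{F})=\Sigma^\dagger$ is the whole computation, and integrating out $(W_i,L_i)$ is free since $\Sigma^\dagger$ depends only on the fixed counts. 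By contrast, you keep $\Lb$ random, split $A_w-\mu_w$ into an iid piece $B_w$ and a sampling-without-replacement piece $C_w$, invoke a separate H\'ajek-type finite-population CLT for $C_w$ conditional on the hold-out, glue the two via characteristic functions, and then pass from weak to uniform convergence by a Polya/smoothing argument. Your decomposition has the virtue of making the two variance contributions in $\Sigma^\dagger$ (iid variation of $\hat{f}(Y)$ versus finite-population variation of $\hat{f}^\dagger(Y)-\hat{f}(Y)$) visibly separate and explains the convex-combination structure; the paper's route is shorter, sidesteps the combinatorial CLT and the boundary case $\ell_w/h_w\to1$ entirely, and yields a quantitative rate rather than just $o_p(1)$.
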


In particular, this result justifies estimating the variance of the estimator $\widehat{\tau}^\dagger$ by
\begin{align*}
  \widehat{\Var}(\widehat{\tau}^\dagger|\hat{f}) 
  &= \frac{1}{\ell_1} 
  \left(\sfrac{\ell_1}{h_1}
\widehat{\Var}(Y^{\hat{f}}|W{=}1,L{=}1,\hat{f})
+
\left(1 - \sfrac{\ell_1}{h_1}\right) \widehat{\Var}(\widehat{Y}^{\hat{f}} - Y^{\hat{f}}|W{=}1,L{=}1,\hat{f},\hat{f}^\dagger)\right)
\\
&\peq
+
\frac{1}{\ell_0} 
\left(\sfrac{\ell_0}{h_0}
\widehat{\Var}(Y^{\hat{f}}|W{=}0,L{=}1,\hat{f})
+
\left(1 - \sfrac{\ell_0}{h_0}\right) \widehat{\Var}(\widehat{Y}^{\hat{f}} - Y^{\hat{f}}|W{=}0,L{=}1,\hat{f},\hat{f}^\dagger)\right)
\end{align*}
on the hold-out (where $L_i = \Ind{i \in \Lb}$), or by a hold-out bootstrap (that holds $h_1,h_0,\ell_1,\ell_0$ fixed).
The variance expression has two components for each of the groups. The first part involves the variance of $Y^{\hat{f}} = \hat{f}(Y)$, which is the part of the variance that cannot be avoided since it stems from inherent within-group variation of the true scores.
The second part is about the difference $\widehat{Y}^{\hat{f}} - Y^{\hat{f}} = \hat{f}^\dagger(Y) - \hat{f}(Y)$ between estimated and true scores. This part is small if the estimated scores $\widehat{Y}^{\hat{f}}=\hat{f}^\dagger(Y)$ are a good approximation of the true scores $Y^{\hat{f}}=\hat{f}(Y)$. Its contribution goes down as the fraction $\sfrac{\ell_1}{h_1}$ or $\sfrac{\ell_0}{h_0}$, respectively, of true scores increases.

\subsection{Human Intervention before Validation}
\label{subsec:Describing-Intervention}

Above, we have considered a scheme by which an AI system proposes themes that may vary systematically across groups.
We have evaluated how these themes actually vary across groups in a validation step on the hold-out data.
However, in practice, not all proposed themes may provide the desired scientific insight, and it may be infeasible to prompt a large language model to achieve all the properties specified in \autoref{subsec:Describing-Themes}.
We, therefore, now consider an additional human validation step before scores are generated and effects estimated.

One reason why proposed themes may miss the mark is that they may not express relevant outcomes of interest. For example, if our procedure is applied to estimate causal effects in an experiment, then a theme proposed in the first stage may express differences related to the administration of treatment itself rather than plausible downstream differences in outcomes.
As an example, consider an application where treatment consists of attending a job-training program. Texts written by program participants may differ between treatment and control groups by whether they mention having participated in a job-training program, which represents a statistically correct difference across texts, but may not constitute a \emph{causally meaningful} one in the parlance of \autoref{subsec:Describing-Themes}.

Our procedure allows for human intervention to select and adjust themes before validation, provided that the ``firewall principle'' \citep{mullainathan2017machine} is not violated.
In general, different themes may be of different scientific value, which only researcher judgement may be able to assess. Our procedure remains valid if the researcher excludes, modifies, or adds themes after they are generated from the hold-out sample, but before validation. As long as the sample is split randomly between the training and hold-out samples and the researcher does not have access to any data from the hold-out sample when making these decisions, the downstream (conditional) inference on themes remains valid. Formally, these modifications become part of the themes $\hat{f}$ produced from the training data, only that they are now resulting from a joint human––AI procedure.

\section{Assessing the Completeness of Descriptions}
\label{sec:Completeness}

Given a set of themes that describe systematic differences across groups, we now ask how complete that description is.
That is, we aim to estimate which fraction of the differences in text is captured by the differences in these specific themes.
To do so, we go back to the prediction approach from \cite{Ludwig2017-xk} and ask how well the theme scores predict group membership.
Following the approach of \cite{fudenberg2022measuring}, we then benchmark this prediction performance against two alternatives: first, a trivial predictor that does not use any text information; and second, the non-parametric benchmark from \autoref{sec:Testing}.

\begin{defi}[Completness of descriptions]
  For themes $f:\mathcal{Y} \rightarrow \mathcal{D}$ and a loss function $L: (\{1,0\} \times \mathcal{W})^{\HO}$ defined over hold-out assignments and predictions as in \autoref{sec:Testing}, we define the completeness of the description provided by $f$ as
  \begin{align*}
      \textnormal{Completeness}(f)
      = \frac{
          \min_{w \in \mathcal{W}} \E[L((W_i, w)_{i \in \HO})]
          -
          \min_{\phi:  \mathcal{D} \rightarrow \mathcal{W}} \E[L((W_i, \phi(f(Y_i)))_{i \in \HO})]
      }{
        \min_{w \in \mathcal{W}} \E[L((W_i, w)_{i \in \HO})]
        -
        \min_{g:  \mathcal{Y} \rightarrow \mathcal{W}} \E[L((W_i, g(Y_i))_{i \in \HO})]
      },
  \end{align*}
  where we assume that there is some predictive information in the text for group labels so that the denominator is strictly positive.
\end{defi}

By construction, this completeness measure is always between 0 and 1. This is because using themes to predict group membership has at least as much (expected) predictive power as a constant prediction but restricts how documents can be used for predicting and thus has (expected) predictive power that is not better than the non-parametric benchmark in the denominator.
When applied to themes $\hat{f}$ learned on the training data, a completeness of close to 1 means that the information in those themes can separate groups as well as possible.
On the other hand, if the theme-based predictor performs considerably worse than the unconstrained benchmark, then the themes are limited in capturing systematic differences.

In practice, we can estimate completeness on the hold-out data by computing
\begin{align*}
  \widehat{\textnormal{Completeness}}(f)
  = \frac{
      L((W_i, \hat{w})_{i \in \HO})
      -
      L((W_i, \hat{\phi}(f(Y_i)))_{i \in \HO})
  }{
    L((W_i, \hat{w})_{i \in \HO})
      -
    L((W_i, \hat{g}(Y_i))_{i \in \HO})
  },
\end{align*}
where $\hat{w}, \hat{\phi}, \hat{g}$ are estimates of the respective minimizers that are learned based on the training data only.
As a result, the empirical measure can, in principle, be outside the unit interval.
This can happen because having more detailed information available to predict group assignment does not necessarily translate to better prediction performance in practice. For example, the complex non-parametric prediction $\widehat{W}_i = \hat{g}(Y_i)$ may overfit to the training sample and have worse out-of-sample performance than the simpler prediction $\widehat{W}^f_i = \hat{\phi}(Y^f_i)$ based on theme scores, assuming that the theme representation provide an effective low-dimensional representation of the text.

In the implementation we demonstrate in \autoref{sec:Empirical} below, we use the same LLM as in \autoref{sec:Testing} to form classifications $\widehat{W}_i = \hat{g}(Y_i) \in \{0,1\}$, obtain classifications $\widehat{W}^f_i = \hat{\phi}(Y^f_i)$ by logistic regression on the training sample, and evaluate prediction quality using accuracy.
We can apply this approach to the themes $f = \hat{f}$ proposed based on the training data, but also to other themes (such as those proposed by a human analyst) to compare their completeness.

\section{Empirical Proof of Concept}
\label{sec:Empirical}

We now evaluate the above approach by describing and evaluating systematic differences between two groups of documents.
Specifically, we consider two groups of academic abstracts. We use our approach to provide statistical evidence that the two groups are different; to describe systematic differences across the two; and to calculate how complete our description of these differences is.
We also show the return to mixing human and machine scores, and compare the main results to an alternative approach based on supervised topic models.
Our illustration demonstrates that our LLM-based implementation can provide clear evidence across the two groups of abstracts as well as easily understandable summaries, even when sample sizes are very low.

\subsection{Data}
\label{subsec:Empirical-Data}

In designing a plausible dataset to apply our method, we face the challenge that our underlying LLM may already have a large amount of publicly available text in its training data. We, therefore, create our own grouping of abstracts of academic manuscripts from scratch to avoid contamination and choose a labeling that does not obviously align with common categorizations. While the LLM may have seen the actual text in these abstracts before, it has not seen this (human-generated) classification into groups. Our sample-splitting approach, therefore, still provides valid statistical inference.

We start with 200 \textit{arXiv} abstracts that are randomly chosen from the econometrics category. A member of our research team then subjectively labels each of the abstracts by whether they align with their own research interests (``treatment'', Group A, $W=1$) or not (``control'', Group B, $W=0$). We employ this human-generated, subjective classification to see whether the causal LLM can capture structure that goes beyond easily measurable differences, such as differences across pre-defined categories or the usage of specific keywords.
We (randomly) divide these 200 abstracts into a first half of ``training'' abstracts and a second half of ``hold-out'' abstracts.

\subsection{Is There a Difference Between Corpora?}
\label{subsec:Empirical-Testing}

First, we check whether the two groups are indeed different. We implement the reverse-prediction approach from \autoref{sec:Testing}, where we obtain a non-parametric classification of documents via an LLM (Google Gemini 1.5 Pro via prompting as in \autoref{apx:Implementation}).
We measure performance via accuracy (the fraction of correctly labeled instances).
We also report F1 scores (the harmonic mean of precision and recall, see \autoref{ftn:accuracyf1}) as an alternative measure of performance that takes into account the imbalance in class assignment in our dataset (the treatment group is smaller than the control).
The LLM-based classification achieves an out-of-sample accuracy of 86\% (F1 score of .74), relative to a trivial classification that always guesses Group B (``control'') and achieves an out-of-sample accuracy of 71\% (F1 score of .00 since its recall is necessarily zero).
A permutation test on the hold-out shows that this difference is significant, with none of 10,000 random permutations achieving a better improvement in accuracy (or in F1 score) over the trivial benchmark ($p$-value of 0.000). 

\subsection{What Is the Difference Across Corpora?}
\label{subsec:Empirical-Describing}

Having established that the groups do not follow the same distribution, we next describe differences across groups.
As a starting point, we first prompt the LLM to describe systematic differences across treatment and control groups based on the training sample only, with results reported in \autoref{fig:summary}.
Following \autoref{sec:Describing} with the template in \autoref{apx:Implementation}, we then have the LLM propose up to six specific themes and scales according to which documents can be scored. These themes and scales are reported in \autoref{tbl:scales}, which is produced by the LLM.
(We again leverage Google Gemini 1.5 Pro via prompting, using a new instance relative to \autoref{subsec:Empirical-Testing}.)
We then have a different member of our research team score hold-out abstracts according to these scales.
In addition, we also obtain machine-generated scores for both training and held-out abstracts.
As a result, we can tabulate average machine-generated scores across treatment and control groups in both the training and hold-out samples, and on the latter, we can compare machine- to human-generated scores.

\begin{figure}[p]
  \begin{Verbatim}
Group A focuses on causal inference and econometric methods, often addressing issues like endogeneity, selection bias, and treatment effects. These documents employ a formal and technical writing style, emphasizing mathematical rigor and statistical inference. They often present novel methodologies or theoretical frameworks for addressing specific econometric challenges.

In contrast, Group B delves into a broader range of applied economics and econometrics topics, including financial markets, time series analysis, network analysis, and auction theory. These documents tend to be more applied, focusing on empirical analysis, modeling techniques, and specific economic phenomena.

The writing style in Group B is generally less technical than in Group A, although still employing statistical and mathematical language. While Group A emphasizes causal identification and estimation, Group B prioritizes model building, forecasting, and understanding economic relationships.

Furthermore, Group A often deals with microeconomic contexts, such as labor markets, education, and individual decision-making. Conversely, Group B encompasses both microeconomic and macroeconomic perspectives, exploring topics like inflation, GDP, and international trade.

Finally, Group A papers often propose new econometric methods or theoretical frameworks, while Group B papers tend to apply existing methods to new datasets or economic questions. This difference highlights the distinct objectives of the two groups: Group A aims to advance econometric methodology, while Group B focuses on utilizing these methods to gain insights into real-world economic issues.
  \end{Verbatim}
  \caption{Description of group differences provided by the LLM based on the training sample.}
  \label{fig:summary}
\end{figure}

\begin{table}
  \centering
  \begin{tabular}{llp{7cm}l}
    \toprule
    ID & Name & Description & Scale \\
    \midrule
    CEI &
    Causal Effects \& Identification &
    Focus on causal inference, identification strategies, and treatment effects (from not at all to strong emphasis) &
    0, 1, 2, 3 \\
    MET &
    Methodological Focus &
    Emphasis on developing new econometric methods or theoretical frameworks (from applied to highly methodological) &
    0, 1, 2, 3 \\
    MIC &
    Microeconomic Context &
    Focus on microeconomic agents, markets, and decision-making (from macroeconomic to microeconomic) &
    0, 1, 2, 3 \\
    FIN &
    Financial Markets \& Time Series &
    Emphasis on financial markets, time series analysis, and forecasting (from not at all to strong emphasis) &
    0, 1, 2, 3 \\
    NAP &
    Network \& Auction Applications &
    Focus on network analysis, game theory, and auction design (from not present to strong focus) &
    0, 1, 2, 3 \\
    \bottomrule
\end{tabular}

          \caption{Causal themes and scales provided by the LLM based on the training sample.}
    \label{tbl:scales}
\end{table}

The differences in average theme scores are provided in \autoref{tbl:Differences},  with conditional standard error estimates based on a bootstrap on the hold-out dataset only
Our main results are the differences in human-generated scores (\autoref{tbl:Differences-Human}), which show clear and statistically significant differences across groups for the first four of the five themes, while the last theme only shows small and noisy differences that could be consistent with equal means.
In addition, we also inspect machine-generated scores for both the hold-out (\autoref{tbl:Differences-Machine}) and the training abstracts (\autoref{tbl:Differences-Training}), with similar findings.
A joint Wald test of the null hypothesis that the score averages are the same across treatment and control group clearly rejects with $p$-value of close to zero, for both the human- and machine-generated hold-out scores.
As we would expect, the differences in the training sample (where treatment status was available during training) tend to be slightly larger than those in the held-out data (where treatment assignment was not available to the LLM).
This indicates slight overfitting and reinforces the need to use sample splitting to obtain valid causal inference.

\begin{table}
  \centering
\begin{subtable}{\textwidth}
  \centering
  \begin{tabular}{l S[table-format=1.2] S[table-format=1.2] S[table-format=1.2] S[table-format=1.2] S[table-format=1.2]}
    \toprule
    Group & {CEI} & {MET} & {MIC} & {FIN} & {NAP} \\
    \midrule
    A (``treatment'') & 1.45 & 2.10 & 0.55 & 0.34 & 0.34 \\
    & {(0.22)} & {(0.11)} & {(0.15)} & {(0.14)} & {(0.14)} \\
    B (``control'') & 0.24 & 1.72 & 0.89 & 0.73 & 0.38 \\
    & {(0.08)} & {(0.11)} & {(0.13)} & {(0.12)} & {(0.10)} \\
    \bottomrule
  \end{tabular}
  
  \caption{Average out-of-sample scores from human-scored data}
  \label{tbl:Differences-Human}
\end{subtable}

\bigskip

\begin{subtable}{\textwidth}
  \centering
  \begin{tabular}{l S[table-format=1.2] S[table-format=1.2] S[table-format=1.2] S[table-format=1.2] S[table-format=1.2]}
    \toprule
    Group & {CEI} & {MET} & {MIC} & {FIN} & {NAP} \\
    \midrule
    A (``treatment'') & 1.97 & 2.14 & 0.55 & 0.28 & 0.31 \\
    & {(0.23)} & {(0.11)} & {(0.19)} & {(0.16)} & {(0.16)} \\
    B (``control'') & 0.44 & 1.48 & 0.38 & 1.00 & 0.41 \\
    & {(0.11)} & {(0.07)} & {(0.11)} & {(0.15)} & {(0.10)} \\
    \bottomrule
  \end{tabular}
  
  \caption{Average out-of-sample scores from LLM-scored data}
  \label{tbl:Differences-Machine}

\end{subtable}

\bigskip

\begin{subtable}{\textwidth}
  \centering

  \begin{tabular}{l S[table-format=1.2] S[table-format=1.2] S[table-format=1.2] S[table-format=1.2] S[table-format=1.2]}
    \toprule
    Group & {CEI} & {MET} & {MIC} & {FIN} & {NAP} \\
    \midrule
    A (``treatment'') & 2.12 & 1.96 & 0.80 & 0.12 & 0.20 \\
    B (``control'') & 0.17 & 1.60 & 0.16 & 1.52 & 0.49 \\
    \bottomrule
  \end{tabular}

  \caption{Average in-sample scores from LLM-scored data}
  \label{tbl:Differences-Training}
\end{subtable}

  \caption{Average scores across groups and themes, with conditional standard error estimates based on a hold-out bootstrap.}
  \label{tbl:Differences}
\end{table}

\subsection{How Complete is the Description?}
\label{subsec:Empirical-Completeness}

To quantify how well the themes describe differences across treatment and control, we use the completeness metric from \autoref{sec:Completeness} with reverse-classification F1 scores as a metric.
As an upper benchmark, we use the reverse LLM-based classification from \autoref{subsec:Empirical-Testing} above, which we define as a completeness of 100\%.
The lower benchmark is obtained by trivial classification into the most common group, which yields an accuracy of 71\% and corresponds to a completeness of 0\%.
The results of the completeness exercise are reported in \autoref{tbl:Reverse}.

A logistic classification of treatment status from the human-labeled theme scores achieves an accuracy score of 85\%, where the logistic regression is learned on training data only.
This performance corresponds to a completeness score of 93\%, suggesting that the LLM-based themes capture differences exceptionally well.
Interestingly, a logistic regression based on machine-generated scores performs somewhat worse (accuracy of 81\%, completeness of 67\%) despite both the regressions being trained on LLM scores on the training data, consistent with an error-in-variables interpretation where LLM scores represent noisy versions of human scores.

In addition, we also evaluate a reverse classification based on a supervised topic model with five topics, as discussed in \autoref{subsec:Describing-Alternatives}.
The supervised topic model only achieves an out-of-sample accuracy of 73\% (completeness of 13\%), which shows the potential gain of scoring based on our causal LLM approach over summaries based on conventional topic models when sample sizes are very small.

We further test the hypothesis that each model performs equivalently to the trivial classifier, with corresponding $p$-values based on a permutation test for accuracy executed on the hold-out provided in \autoref{tbl:Reverse}. All the approaches described here document significant differences across groups, with the $p$-value for the supervised topic model at 4.1\% compared to 0.0\% for the LLM-based implementation.
These conclusions would be similar if we used F1 scores, where we note, however, that the best trivial classification would be to assign all instances to Group A (``treatment'') in this case for a benchmark F1 score of .29.

\begin{table}
  \centering

  \begin{tabular}{lrrrc}
      \toprule
      Method & Accuracy & F1 Score & Completeness & $p$-value\\
      \midrule
      Trivial classification & 71\% & 0.00 & 0\%  & --\\
      Direct LLM classification & 86\% & 0.74 & 100\%  & 0.000\\
      \midrule
      Logit based on LLM themes (human labels) & 85\% & 0.71 & 93\% & 0.000 \\
      Logit based on LLM themes (LLM labels) & 81\% & 0.67  & 67\% & 0.000\\
      Logit based on supervised topic model & 73\% & 0.27 & 13\% & 0.041\\
      \bottomrule
  \end{tabular}

    \caption{Hold-out evaluation in terms of reverse classification of treatment assignment, with $p$-values based on a hold-out permutation test of the null hypothesis that out-of-sample classification performance is no better than the trivial classification benchmark. The trivial classification maximizes accuracy, which here means assigning everybody to Group B (``control''). Completeness and $p$-values calculated based on accuracy.}
      \label{tbl:Reverse}
  \end{table}

\subsection{Combining Human and Machine Scores}
\label{subsec:Empirical-Combining}

We can further conceptualize the cost--benefit tradeoff of human scoring in terms of a tradeoff between the cost of producing score labels and the benefit of improved precision in our average treatment effect estimate at the theme level. Using the approach outlined in \autoref{subsec:Describing-Combining}, we can simulate this tradeoff by estimating $\hat{\tau}^\dagger_j$ for various sizes of the set $\Lb \subseteq \HO$ of human-scored instances in the hold-out set. In particular, for a given $\ell = |\Lb| \leq |\HO| = 100$, we estimate the variance of $\hat{\tau}^\dagger_j$ via bootstrap, and plot how that value changes with $\ell$ for each of our five identified themes (holding the fraction of ``treated'' fixed, $\sfrac{\ell_1}{\ell} \approx \sfrac{h_1}{h} = 29\%$). The resulting plot (\autoref{fig:labelcost}) shows that the precision of the estimator for each theme is monotonically decreasing in $\ell$. Furthermore, the curvature of each line illustrates decreasing marginal benefit to additional human scoring, implying that it may be optimal (from a decision-theoretic point of view) to human-label a certain number of held-out documents, with the stopping point characterized by the point at which the marginal benefit (in terms of estimator precision) outweighs the marginal cost (in terms of scoring cost).

\begin{figure}
    \centering
    \includegraphics[scale=0.5]{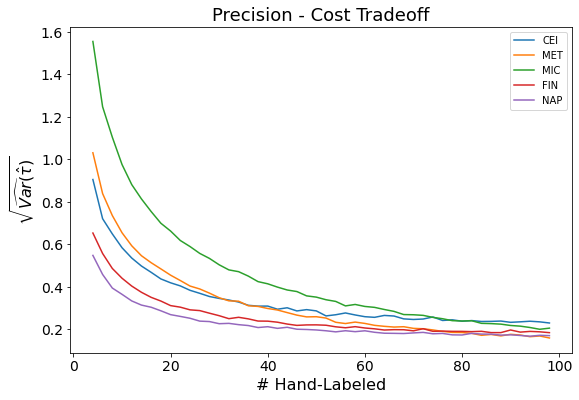}
    \caption{Tradeoff between the number of human-labeled hold-out datapoints ($\ell$) and the precision of the simple average treatment effect estimator for each theme.}
    \label{fig:labelcost}
\end{figure}

\section{Pre-Specification and Replicability}
\label{sec:Replication}

We propose a procedure that uncovers and validates systematic differences across complex text documents.
In theory, the procedure provides statistical validity even when we lack guarantees about the underlying AI tools.
However, when the procedure is used in practice, we may be concerned about our ability to replicate the procedure and verify its results.
In this section, we propose a flexible pre-specification approach based on sample splitting that balances the value of flexibility with the need for statistical validity and ex-post verification.

Concerns around replicability arise when the procedure is run multiple times and some estimation and reporting choices are adjusted based on the results, such as when prompts are changed, themes selected, or scores modified after looking at some of the hold-out evaluation.
With the use of large language models, there is an additional concern that the LLM explicitly or implicitly had access to data from the hold-out dataset.
A standard approach to mitigate such concerns is to rely on pre-specification, where all steps of the analysis are pre-registered in a pre-analysis plan before the data is accessed, and the procedure is then run under controlled conditions according to that pre-specification.

We believe that strict pre-analysis plans are limited in working for AI procedures like ours, both because they do not capture all concerns around replicability and because they can be too restrictive.
First, even when the analysis is pre-specified and all plans are followed, complex analysis based on LLMs (or other AI systems) may fail to replicate because it may produce different results when applied to the same data. This can be because the underlying LLM was changed by the developer or even because two runs of the \emph{same} model produce different outputs due to inherent randomness or instability (a challenge not usually overcome by setting randomness parameters such as the ``temperature'' to their minimum).
Second, strict pre-analysis plans can also be too restrictive because some corrections and adjustments based on a human inspection of preliminary results may be essential to producing meaningful output, such as expert judgment about which themes represent meaningful causal effects (\autoref{subsec:Describing-Intervention}).
We, therefore, see pre-specifying the entire algorithm before any data becomes available as a solution that is simultaneously insufficient and impractical.

Instead of pre-specifying the complete analysis, we propose a sequential-access analysis plan where the analyst first obtains access to the training data only, and a third party witholds access to the hold-out data (similar to split-sample analysis plans proposed in \citealp{anderson2017split}).
The analyst then runs all the preliminary analysis on the training data, adjusts prompts as necessary, and chooses themes.
Before the hold-out data becomes available, the analyst commits to the themes to be evaluated out of sample.
Only then is the remaining data made available on which the specified analysis is performed and reported, with the third party certifying its result.%
\footnote{
One challenge with sample splitting arises from our approach to testing based on LLMs  (\autoref{sec:Testing}).
Specifically, our proposed test for differences across groups requires access to texts from the hold-out data to predict group labels.
For the testing part, the pre-specification may therefore either include detailed instructions about how to implement scoring once the documents become available, possibly including prompts.
Or the procedure may involve an intermediate step where held-out documents are made available and predictions of group labels are registered before the remaining held-out data (that is, held-out group labels) are made available as well.
}
In our eyes, this approach balances the need for replicability with the benefit of adjustments, and does not rely on the full replicability of hypothesis generation.

\section{Conclusion and Discussion}
\label{sec:Conclusion}

We provide a tool for estimating causal effects on qualitative outcomes captured in text.
This method combines the power of modern AI tools with statistical guarantees.
In data from a randomized trial, it summarizes individual responses in terms of core themes and then provides causal inference on the resulting summaries.

\paragraph{Role of AI.}
Within this procedure, we view the role of generative AI as a hypothesis-generation machine.
This view does \emph{not} give AI a role in validating evidence by itself. Instead, it gives LLMs a role in generating plausible hypotheses that are then validated, similar to the approach of \cite{zhong2023goaldrivendiscoverydistributional}. For this procedure to work, hypotheses have to be formulated in a way that can be validated based on separate ground-truth data. In our framework, AI-generated themes play the role of these hypotheses, and we then use sample splitting and human scoring for validation.
We do not see this as a limitation of the current generation of AI solutions specifically but instead as a general structure for integrating machine learning and AI.

\paragraph{Researcher--AI complementarity.}
While the AI can suggest plausible causal effects, our validation and inference necessarily rely on human judgment: we cannot automatically evaluate whether causal themes are correct, meaningful, and relevant representations of causal effects expressed in text.
Beyond this need for validation, our approach also faces remaining challenges with respect to the interpretation of its output. 
On the one hand, an AI tool may be more systematic in picking up differences between treatment and control groups than a human analyst. At the same time, the analysis may still be driven by pre-conceptions embedded in the corpus the LLM is trained on and is thus not neutral.
Ultimately, we see the AI as a communication tool \citep[akin to][]{ludwig2024machine} that proposes summaries that can be evaluated statistically, but still have to be contextualized by an expert researcher to be evaluated substantively.

\bibliography{_references}

\newpage

\appendix

\section*{Appendix}

\section{Implementation Details}
\label{apx:Implementation}

Our implementation is based on Google Gemini Pro 1.5 (gemini-1.5-pro-001) with temperature set to 0.

\paragraph{Prompt for LLM providing reverse classification} (queried separately from prompts that describe differences, followed by training and hold-out documents formatted as JSON)

\begin{enumerate}
    \item[] 

    \begin{Verbatim}[frame=single, commandchars=\\\{\}]
Here is a list of documents, with associated group assignments. Note that the document IDs are randomly assigned. Your task is to learn to predict the group assignment (A or B) from the document text. You will be evaluated on another set of documents that I will pass to you. Do not give me any code or describe any other modeling mechanism: simply learn the relationship between text and group and wait for me to give you the testing set. Here is the training set:
\textbox{Training documents with treatment indicator}
Here is the testing set. Please report only the document ID and predicted group for each document. Please score every document.
\textbox{Testing documents without treatment indicator}
    \end{Verbatim}
\end{enumerate}

\paragraph{Prompts for LLM describing differences} (queried successively within the same context window, in a separate instance from the above)

    \begin{enumerate}
        \item Summary prompt (followed by training documents formatted as JSON)
        \begin{Verbatim}[frame=single, commandchars=\\\{\}]
Below is a list of documents that belong to groups A and B. Look at all the documents, and describe the main systematic differences in themes between documents in groups A and B, in about ten sentences. These differences could concern topics, tone and style of language, sentiment, or any other major differences across groups.
Here are the documents I want you to analyze, together with which group they are in. Note that the order is random, and the id is not informative about the topic or group.
\textbox{Training documents with treatment indicator}
        \end{Verbatim}
        \item Theme and score prompt
    \begin{Verbatim}[frame=single]
Provide up to six themes that help distinguish between groups A and B in the previously provided documents, based on their systematic differences. Each theme is one aspect of the text, such as a topic, sentiment, or characteristic. The themes should all be different (mutually exclusive), and also cover all the differences between groups A and B as well as possible (exhaustive with respect to differences). For each of the themes, create a scale to score each document. Specifically, these themes and scales could be of the following type:
(a) Topics. Score each document on a scale from 0 to 3 by how much a topic appears in the document.
(b) Language. Score a specific aspect of the style or tone on a scale from 0 to 3, such as how technical the language is or whether is uses a lot of metaphors.
(c) Sentiment. Score a specific aspect of the sentiment expressed by the text from -1 to +1, such as how positive or negative it deals with a specific topic.
If there are other types of themes or scales that are very fitting for the  application, please also use those.

Your output should take the same JSON form as in the following examples. First, your output should include a description of themes. Here is an example of themes that come from other data:

```
[
    {
        "theme_id": "TEC",
        "theme_name": "Technical language",
        "theme_description": "How technical the language is (from not technical to very technical)",
        "theme_scale": [0, 1, 2, 3]
    },
    {
        "theme_id": "APL",
        "theme_name": "Focus on applicability",
        "theme_description": "The degree to which the presented results can be applied in practice (from not at all to very well)",
        "theme_scale": [0, 1, 2, 3]
    },
    {
        "theme_id": "ANI",
        "theme_name": "Animals",
        "theme_description": "Emphasis on animals (from not at all to very much)",
        "theme_scale": [0, 1, 2, 3]
    },
    {
        "theme_id": "GSE",
        "theme_name": "General sentiment",
        "theme_description": "The general sentiment of the text (from negative to positive)",
        "theme_scale": [-1, 0, 1]
    },
    {
        "theme_id": "SRO",
        "theme_name": "Sentiment about robots",
        "theme_description": "The sentiment expressed about robots specifically  (from negative to positive)",
        "theme_scale": [-1, 0, 1]
    },
    ,
    {
        "theme_id": "POE",
        "theme_name": "Poetic form",
        "theme_description": "The poetic form of the document (actual form)",
        "theme_scale": ["Sonnet", "Haiku", "neither"]
    }
]
```

Second, a scoring of documents according to these themes:

```
[
    {
        ID1: TEC1,APL0,ANI3,GSE-1,SRO0,POEHaiku
    },
    {
        ID2: ...
    },
    ...
]
```

Note that "ID1" and "ID2" correspond to the document_id values of the above documents. The theme scores should be presented as they are above, with a comma-delimited list where each element comprises of the 3-letter theme_id followed by the score for the given document for that theme. Please score all of the documents.
    \end{Verbatim}

    \item Out-of-sample machine score prompt (followed by held-out documents formatted as JSON)
    
    \begin{Verbatim}[frame=single, commandchars=\\\{\}]
For each of the new documents below, for which I do not tell you whether they are in group A or B, please score these according to the themes as before. Your output should be of the same format as before. Here are the new documents:
\textbox{Testing documents without treatment indicator}
    \end{Verbatim}
\end{enumerate}

\newpage

\section{Details of Statistical Methods}

\paragraph{\autoref{tbl:Differences}.}

The standard errors reported in this table are produced via nonparametric bootstrap. For each of 10,000 bootstrap draws, we resample (with replacement) from our hold-out set. We then calculate the average theme scores for both groups of documents based in this resampled set, as well as the variance-covariance matrices for each group. The square roots of the diagonal of these matrices produce the standard errors contained in the table. To run the joint test for cross-group difference, we utilize a Wald test. Specifically, we assume that each mean vector is drawn from a Normal distribution:
\begin{align*}
    \tilde{\mu}_0 \sim \mathcal{N}(\mu_0, \Sigma_0) \qquad \tilde{\mu}_1 \sim \mathcal{N}(\mu_1, \Sigma_1)
\end{align*}
We then estimate $\widehat{\mu}_0, \widehat{\mu}_1, \widehat{\Sigma}_0, \widehat{\Sigma}_1$ using the bootstrap and construct our test statistic:
\begin{align*}
    Z = (\widehat{\mu}_1 - \widehat{\mu}_0)^\top(\widehat{\Sigma}_1 + \widehat{\Sigma}_0)^{-1}(\widehat{\mu}_1 - \widehat{\mu}_0)
\end{align*}
The $p$-value is then calculated via the corresponding quantile of the $\chi^2$ distribution.

\paragraph{\autoref{tbl:Reverse}.}

The $p$-values reported in this table are constructed via permutation tests. For each of 10,000 draws, we permute true group assignment labels within our held-out set. We then calculate accuracies for each model, comparing the permuted labels with the unadulterated model projected group assignments. We then construct a distribution of these accuracies. The $p$-value for each model listed in the table is then simply calculated as the mass of this distribution lying on or above the given model's accuracy based on unpermuted true labels. (We note that we do not have to normalize by the trivial classification accuracy for testing in this case since its accuracy is invariant to permutations.)

\paragraph{\autoref{fig:labelcost}.}

The variance of the simple average treatment effect estimators for a given number of human-labeled documents is calculated via nonparametric bootstrap. We produce 100,000 bootstrap values via a nested procedure. In the outer loop, we resample (with replacement) documents from our held-out set. Then, in the inner loop, we randomly assign each document to be in either the human-labeled set or not (i.e., we randomly decide whether to use a document's human-labeled theme scores or its LLM-labeled theme scores). We hold the fraction of human-labeled documents constant in each of the two groups. We then calculate the bias-corrected estimate of the average treatment effect. We run the outer loop 1000 times and the inner loop 100 times, generating 100,000 total bootstrapped values. We then simply estimate the variance across these draws for each theme.

\newpage

\section{Proofs}
\label{apx:Proofs}

\begin{proof}[Proof of \autoref{prop:Test}]
  Under the null hypothesis $Y|W{=}1 \stackrel{d}{=} Y|W{=}0$ (or equivalently, $Y \perp W$), we have that 
  $(\widehat{W}_i, W_{\pi(i)})_{i \in \HO} \: | \: (Y_i,W_i)_{i \in \Tr},(Y_i)_{i \in \HO}$
  is invariant to the choice of permutation $\pi$ of $\HO$.
  As a result, so is $\Delta_\pi | \: (Y_i,W_i)_{i \in \Tr},(Y_i)_{i \in \HO}$.
  For $B$ permutations $(\pi_b)_{b=0}^B$ of $\HO$ drawn uniformly at random,
  we thus have that
  \[
    (\Delta, \Delta_{\pi_1}, \ldots, \Delta_{\pi_B})
    \stackrel{d}{=}
    (\Delta_{\pi_0}, \Delta_{\pi_1 \circ \pi_0}, \ldots, \Delta_{\pi_B \circ \pi_0})
    \stackrel{d}{=}
    (\Delta_{\pi_0}, \Delta_{\pi_1}, \ldots, \Delta_{\pi_B})
    | \: (Y_i,W_i)_{i \in \Tr},(Y_i)_{i \in \HO},
  \]
  where we have also used that $(\pi_0,\pi_1 \circ \pi_0, \ldots, \pi_B \circ \pi_0) \stackrel{d}{=} (\pi_0,\pi_1, \ldots, \pi_B)$.
  In particular, $\Delta$ can be seen as drawn uniformly from the elements in $\Delta, \Delta_{\pi_1}, \ldots, \Delta_{\pi_B}$ conditional on $(Y_i,W_i)_{i \in \Tr},(Y_i)_{i \in \HO}$ (since the same holds for $\Delta_{\pi_0}$ from $\Delta_{\pi_0}, \Delta_{\pi_1}, \ldots, \Delta_{\pi_B}$ by exchangeability).
  Hence, writing $\Delta_{(1)}, \ldots, \Delta_{(B+1)}$ for a decreasing ordering of the $B+1$ elements in $\Delta, \Delta_{\pi_1}, \ldots, \Delta_{\pi_B}$,
  we have that $\P\left( \Delta > \Delta_{(b+1)} \right) \leq \frac{b}{1 + B}$ for all $b \in \{1,\ldots,B\}$.
  Then, for every $\alpha < 1$,
  \begin{align*}
    &\P\left(\widehat{p} \leq \alpha \middle| (Y_i,W_i)_{i \in \Tr},(Y_i)_{i \in \HO} \right)
    = 
    \P\left(1 + \sum_{b=1}^B \Ind{ \Delta \leq \Delta_{\pi_b}} \leq (1 + B) \alpha \middle| (Y_i,W_i)_{i \in \Tr},(Y_i)_{i \in \HO} \right)
    \\
    &=
    \P\left(\Delta > \Delta_{(\lfloor (1 + B) \alpha \rfloor + 1)} \middle| (Y_i,W_i)_{i \in \Tr},(Y_i)_{i \in \HO} \right)
    \leq \frac{\lfloor (1 + B) \alpha \rfloor}{1 + B}
    \leq \alpha.
    \qedhere
  \end{align*}
\end{proof}

For the following two proofs, it is helpful to use this version of the Berry--Esseen inequality:

\begin{lem}[A conditional Berry--Esseen inequality]
  \label{lem:conditionalclt}
  Assume that the random variables $X_1,\ldots,X_N \in \R$ are mean-zero and independent conditional on $\mathcal{F}$, that the third (conditional) moments are bounded, and that $\E[X_i^2|\mathcal{F}] > 0$ almost surely for at least one unit $i$.
  Then
  \begin{align*}
    \sup_{t \in \R}\left|\P\left(
      \sum_{i=1}^N X_i \leq t\middle|\mathcal{F}\right)
      -
      \P\left(\N\left(\0,
      \sum_{i=1}^N \Var(X_i|\mathcal{F})
      \right) \leq t \middle|\mathcal{F}\right)\right|
    \leq
    C_0
    \frac{\sum_{i=1}^N \E[|X_i|^3|\mathcal{F}]}{\left(\sum_{i=1}^N \E[X_i^2|\mathcal{F}]\right)^{3/2}} 
  \end{align*}
  almost surely for some universal constant $C_0 < \infty$.
\end{lem}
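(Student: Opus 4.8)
The plan is to obtain this conditional statement as an immediate consequence of the classical (unconditional) uniform Berry--Esseen inequality, applied ``realization by realization'' after conditioning on $\mathcal{F}$. First I would fix a regular conditional distribution of $(X_1,\ldots,X_N)$ given $\mathcal{F}$. By the conditional-independence and conditional-mean-zero hypotheses, there is an $\mathcal{F}$-measurable event of probability one on which this regular conditional distribution makes $X_1,\ldots,X_N$ genuinely independent and mean zero, with finite (indeed bounded) conditional third moments $\E[|X_i|^3\mid\mathcal{F}]$, and with $s_N^2 := \sum_{i=1}^N \E[X_i^2\mid\mathcal{F}] = \sum_{i=1}^N \Var(X_i\mid\mathcal{F}) > 0$ almost surely --- the last because, by assumption, $\E[X_i^2\mid\mathcal{F}] > 0$ a.s.\ for at least one $i$. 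On the complementary null event, or wherever a conditional third moment is infinite, the right-hand side of the claimed inequality is $+\infty$, so there is nothing to prove there.

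On the good event I would invoke the classical uniform Berry--Esseen theorem for sums of independent, not necessarily identically distributed, mean-zero random variables: for each such realization $\omega$,
\[
  \sup_{u \in \R}\left|\P\!\left(\frac{1}{s_N}\sum_{i=1}^N X_i \le u \,\middle|\, \mathcal{F}\right)(\omega) - \Phi(u)\right| \;\le\; C_0\,\frac{\sum_{i=1}^N \E[|X_i|^3\mid\mathcal{F}](\omega)}{s_N(\omega)^3},
\]
where $C_0$ is a universal constant and $\Phi$ the standard normal CDF. Substituting $u = t/s_N$ and using that the conditional law of $\N(\0,s_N^2)$ has CDF $t \mapsto \Phi(t/s_N)$, the left-hand side becomes exactly $\sup_{t\in\R}\big|\P(\sum_i X_i \le t\mid\mathcal{F}) - \P(\N(\0,\sum_i\Var(X_i\mid\mathcal{F})) \le t\mid\mathcal{F})\big|$ evaluated at $\omega$; since this holds for almost every $\omega$, the stated inequality follows a.s.

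The only point that needs care is measurability: for the ``almost surely'' in the conclusion to make sense, the supremum on the left-hand side must be an $\mathcal{F}$-measurable random variable. I would argue this by noting that $t \mapsto \P(\sum_i X_i \le t\mid\mathcal{F})$ and $t \mapsto \Phi(t/s_N)$ are both nondecreasing and right-continuous, so the supremum over $t \in \R$ equals the supremum over $t \in \mathbb{Q}$, a countable supremum of $\mathcal{F}$-measurable functions and hence itself $\mathcal{F}$-measurable. With this in hand the lemma is fully reduced to the classical Berry--Esseen bound, which I would cite rather than reprove. I expect no substantive obstacle here --- all the work is in the bookkeeping around regular conditional distributions and measurability, not in any new probabilistic argument. (It is also worth flagging that, since the lemma is stated for scalar $X_i \in \R$, the $\0$ appearing in the displayed conclusion should be read as the scalar $0$; the argument above is purely one-dimensional.)
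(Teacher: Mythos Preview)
Your proposal is correct. Note, however, that the paper does not actually supply a proof of this lemma: it is stated without proof as a known tool (``this version of the Berry--Esseen inequality'') and then invoked in the proofs of \autoref{prop:Inference} and \autoref{prop:Combining}. Your argument---applying the classical non-iid Berry--Esseen bound under a regular conditional distribution, then handling measurability of the supremum via the countable set $\mathbb{Q}$---is exactly the standard derivation one would give, and there is nothing to compare it against in the paper itself.
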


\begin{proof}[Proof of \autoref{prop:Inference}]
  Conditional unbiasedness is immediate.
  To obtain the conditional central limit theorem, we first consider the case $k=1$, and apply \autoref{lem:conditionalclt} for the hold-out where we condition on $\hat{f}$ as well as $(W_i)_{i \in \HO}$.
  We assume that $|Y_i^{\hat{f}}| \leq c$ a.s.\
  and $\Var\left(Y_i^{\hat{f}}\middle|(Y_i,W_i)_{i \in \Tr}\right) \geq \varepsilon$ a.s.\ for some universal constants $c < \infty, \varepsilon > 0$.
  Starting with
  \begin{align*}
    \sqrt{\underline{h}} (\widehat{\tau} - \tau)
    =
    \sum_{i \in \HO; W_i = 1} \frac{\sqrt{\underline{h}}}{h_1} (Y_i^{\hat{f}} - \E[Y^{\hat{f}}|W{=} 1])
    -
    \sum_{i \in \HO; W_i = 0} \frac{\sqrt{\underline{h}}}{h_0} (Y_i^{\hat{f}} - \E[Y^{\hat{f}}|W{=} 0])
  \end{align*}
  we obtain independent increments
  \begin{align*}
    X_i = \begin{cases}
      +\frac{\sqrt{\underline{h}}}{h_1} (Y_i^{\hat{f}} - \E[Y^{\hat{f}}|W{=} 1]), & W_i=1 \\
      -\frac{\sqrt{\underline{h}}}{h_0} (Y_i^{\hat{f}} - \E[Y^{\hat{f}}|W{=} 0]), & W_i=0
    \end{cases}
  \end{align*}
  with
  \begin{align*}
    &\E[|X_i|^3|\mathcal{F}] \leq \begin{cases}
      \frac{\underline{h}^{3/2}}{h^3_1} 8 c^3, & W_i=1 \\
      \frac{\underline{h}^{3/2}}{h^3_0} 8 c^3, & W_i=0
    \end{cases}
    \leq
    \frac{8 c^3}{\sqrt{\underline{h}}}
    \begin{cases}
      \frac{1}{h_1}, & W_i=1 \\
      \frac{1}{h_0}, & W_i=0
    \end{cases}
    \\
    &\E[X_i^2|\mathcal{F}] 
    =
    \Var(X_i|\mathcal{F})
    =
    \begin{cases}
      \frac{\underline{h}}{h^2_1} \Var(Y^{\hat{f}}|W{=}1), & W_i=1 \\
      \frac{\underline{h}}{h^2_0} \Var(Y^{\hat{f}}|W{=}0), & W_i=0
    \end{cases}
    \geq \varepsilon  \begin{cases}
      \frac{\underline{h}}{h^2_1}, & W_i=1 \\
      \frac{\underline{h}}{h^2_0}, & W_i=0
    \end{cases}
  \end{align*}
  almost surely.
  As a consequence,
  for
  $\Sigma = \sum_{i \in \HO} \Var(X_i|\mathcal{F})  = \frac{\underline{h}}{h_1} \Var(Y^{\hat{f}}|W{=}1) + \frac{\underline{h}}{h_0} \Var(Y^{\hat{f}}|W{=}0)$
  we have that
  $\sum_{i \in \HO} \E[X_i^2|\mathcal{F}] = \Sigma \geq \varepsilon$
  and
  $\sum_{i \in \HO} \E[|X_i|^3|\mathcal{F}] \leq 16 \frac{c^3}{\sqrt{\underline{h}}}$
  and thus
  \begin{align*}
    \sup_{t \in \R}\left|\P\left(
      \sqrt{\underline{h}} (\widehat{\tau} - \tau) \leq t\middle|\hat{f},(W_i)_{i \in \HO}\right)
      -
      \P\left(\N\left(\0,
      \Sigma
      \right) \leq t \middle||\hat{f},(W_i)_{i \in \HO}\right)\right|
    \leq
    C_0
    \frac{16 c^3}{\sqrt{\underline{h}} \varepsilon^{3/2}}.
  \end{align*}
  We obtain the statement in the proposition for $k=1$ by integrating over $(W_i)_{i \in \HO}$ (and pulling the supremum out of the expectation) and letting $\underline{h} \rightarrow \infty$.
  The result for general $k$ follows with the Cram\'er--Wold device.
\end{proof}

\begin{proof}[Proof of \autoref{prop:Combining}]
  Conditional unbiasedness is, again, immediate.
  We obtain the asymptotic statement similarly to the proof of \autoref{prop:Inference}, where we now condition on $\hat{f},\hat{f}^\dagger$ as well as $(W_i,L_i)_{i \in \HO}$ with $L_i = \Ind{i \in \Lb}$ and apply \autoref{lem:conditionalclt} on $\HO$.
  Specifically, we start with $k=1$ and
  \begin{align*}
    \sqrt{\underline{\ell}} (\widehat{\tau}^\dagger - \tau)
    =
    &
    \sum_{i \in \HO; i \in \Lb, W_i = 1}
    \frac{\sqrt{\underline{\ell}}}{\ell_1} \left(
      (Y^{\hat{f}}_i - \E[Y^{\hat{f}}|W{=}1,\hat{f}])
      -
      (1 - \sfrac{\ell_1}{h_1})
      (\widehat{Y}^{\hat{f}}_i - \E[\widehat{Y}^{\hat{f}}|W{=}1,\hat{f},\hat{f}^\dagger])
    \right)
    \\
    &-
    \sum_{i \in \HO; i \in \Lb, W_i = 0}
    \frac{\sqrt{\underline{\ell}}}{\ell_0} \left(
      (Y^{\hat{f}}_i - \E[Y^{\hat{f}}|W{=}0,\hat{f}])
      -
      (1 - \sfrac{\ell_0}{h_0})
      (\widehat{Y}^{\hat{f}}_i - \E[\widehat{Y}^{\hat{f}}|W{=}0,\hat{f},\hat{f}^\dagger])
    \right)
    \\
    &+
    \sum_{i \in \HO; i \notin \Lb, W_i = 1}
    \frac{\sqrt{\underline{\ell}}}{h_1} 
    \left(\widehat{Y}^{\hat{f}}_i - \E[\widehat{Y}^{\hat{f}}|W{=}1,\hat{f},\hat{f}^\dagger]\right)
    \\
    &-
    \sum_{i \in \HO; i \notin \Lb, W_i = 0}
    \frac{\sqrt{\underline{\ell}}}{h_0} 
    \left(\widehat{Y}^{\hat{f}}_i - \E[\widehat{Y}^{\hat{f}}|W{=}0,\hat{f},\hat{f}^\dagger]\right)
  \end{align*}
  for which
  \begin{align*}
    X_i
    =
    \begin{cases}
      + \frac{\sqrt{\underline{\ell}}}{\ell_1} \left(
      (Y^{\hat{f}}_i - \E[Y^{\hat{f}}|W{=}1,\hat{f}])
      -
      (1 - \sfrac{\ell_1}{h_1})
      (\widehat{Y}^{\hat{f}}_i - \E[\widehat{Y}^{\hat{f}}|W{=}1,\hat{f},\hat{f}^\dagger])
      \right), & W_i=1, L_i = 1 \\
      - \frac{\sqrt{\underline{\ell}}}{\ell_0} \left(
      (Y^{\hat{f}}_i - \E[Y^{\hat{f}}|W{=}0,\hat{f}])
      -
      (1 - \sfrac{\ell_0}{h_0})
      (\widehat{Y}^{\hat{f}}_i - \E[\widehat{Y}^{\hat{f}}|W{=}0,\hat{f},\hat{f}^\dagger])
      \right), & W_i=0, L_i = 1 \\
      + \frac{\sqrt{\underline{\ell}}}{h_1}
      \left(\widehat{Y}^{\hat{f}}_i - \E[\widehat{Y}^{\hat{f}}|W{=}1,\hat{f},\hat{f}^\dagger]\right), & W_i=1, L_i = 0 \\
      - \frac{\sqrt{\underline{\ell}}}{h_0}
      \left(\widehat{Y}^{\hat{f}}_i - \E[\widehat{Y}^{\hat{f}}|W{=}0,\hat{f},\hat{f}^\dagger]\right), & W_i=0, L_i = 0
    \end{cases}
  \end{align*}
  and thus
  \begin{align*}
    &\E[|X_i|^3|\mathcal{F}] \leq \begin{cases}
      \frac{\underline{\ell}^{3/2}}{\ell^3_1} 64 c^3, & W_i=1, L_i = 1 \\
      \frac{\underline{\ell}^{3/2}}{\ell^3_0} 64 c^3, & W_i=0, L_i = 1 \\
      \frac{\underline{\ell}^{3/2}}{h^3_1} 8 c^3, & W_i=1, L_i = 0 \\
      \frac{\underline{\ell}^{3/2}}{h^3_0} 8 c^3, & W_i=0, L_i = 0
    \end{cases}
    \leq
    \frac{64 c^3}{\sqrt{\underline{\ell}}}
    \begin{cases}
      \frac{1}{\ell_1}, & W_i=1, L_i = 1 \\
      \frac{1}{\ell_0}, & W_i=0, L_i = 1 \\
      \frac{1}{\max(h_1 - \ell_1,1)}, & W_i=1, L_i = 0 \\
      \frac{1}{\max(h_0 - \ell_0,1)}, & W_i=0, L_i = 0
    \end{cases}
\end{align*}
and $\E[X_i^2|\mathcal{F}]
=
\Var(X_i|\mathcal{F})$ with
\begin{align*}
  \Var(X_i|\mathcal{F})
  =
  \begin{cases}
    \begin{aligned}
      &\frac{\underline{\ell}}{\ell^2_1} \Var(Y^{\hat{f}}|W{=}1,\hat{f})- \frac{2 \underline{\ell}}{\ell_1^2} (1 - \sfrac{\ell_1}{h_1}) \Cov(Y^{\hat{f}},\widehat{Y}^{\hat{f}}|W{=}1,\hat{f},\hat{f}^\dagger) \\
      &+ \frac{\underline{\ell}}{\ell_1^2} (1 - \sfrac{\ell_1}{h_1})^2 \Var(\widehat{Y}^{\hat{f}}|W{=}1,\hat{f},\hat{f}^\dagger)
    \end{aligned}, & W_i=1, L_i = 1 \\
    \begin{aligned}
      &\frac{\underline{\ell}}{\ell^2_0} \Var(Y^{\hat{f}}|W{=}0,\hat{f})- \frac{2 \underline{\ell}}{\ell_0^2} (1 - \sfrac{\ell_0}{h_0}) \Cov(Y^{\hat{f}},\widehat{Y}^{\hat{f}}|W{=}0,\hat{f},\hat{f}^\dagger) \\
      &+ \frac{\underline{\ell}}{\ell_0^2} (1 - \sfrac{\ell_0}{h_0})^2 \Var(\widehat{Y}^{\hat{f}}|W{=}0,\hat{f},\hat{f}^\dagger)
    \end{aligned}, & W_i=0, L_i = 1 \\
    \frac{\underline{\ell}}{h^2_1} \Var(\widehat{Y}^{\hat{f}}|W{=}1,\hat{f},\hat{f}^\dagger), & W_i=1, L_i = 0 \\
    \frac{\underline{\ell}}{h^2_0} \Var(\widehat{Y}^{\hat{f}}|W{=}0,\hat{f},\hat{f}^\dagger), & W_i=0, L_i = 0
  \end{cases}
\end{align*}
almost surely.
As a consequence,
$
  \sum_{i \in \HO} \E[|X_i|^3|\mathcal{F}]
  \leq
  \frac{256 c^3}{\sqrt{\underline{\ell}}}
$
and for
$
  \Sigma^\dagger =
  \sum_{i \in \HO} \E[X_i^2|\mathcal{F}]
  =
  \sum_{i \in \HO} \Var(X_i|\mathcal{F})
$,
\begin{align*}
  \Sigma^\dagger
  =
  &\frac{\underline{\ell}}{\ell_1} \Var(Y^{\hat{f}}|W{=}1,\hat{f})- \frac{2 \underline{\ell}}{\ell_1} \frac{h_1 - \ell_1}{h_1} \Cov(Y^{\hat{f}},\widehat{Y}^{\hat{f}}|W{=}1,\hat{f},\hat{f}^\dagger) \\
  &+ \frac{\underline{\ell}}{\ell_1} \left(\frac{h_1 - \ell_1}{h_1}\right)^2 \Var(\widehat{Y}^{\hat{f}}|W{=}1,\hat{f},\hat{f}^\dagger)
  + \frac{\underline{\ell} (h_1 - \ell_1)}{h^2_1} \Var(\widehat{Y}^{\hat{f}}|W{=}1,\hat{f},\hat{f}^\dagger)
  \\
  &+ \frac{\underline{\ell}}{\ell_0} \Var(Y^{\hat{f}}|W{=}0,\hat{f})- \frac{2 \underline{\ell}}{\ell_0} \frac{h_0 - \ell_0}{h_0} \Cov(Y^{\hat{f}},\widehat{Y}^{\hat{f}}|W{=}0,\hat{f},\hat{f}^\dagger) \\
  &+ \frac{\underline{\ell}}{\ell_0} \left(\frac{h_0 - \ell_0}{h_0}\right)^2 \Var(\widehat{Y}^{\hat{f}}|W{=}0,\hat{f},\hat{f}^\dagger)
  \displaybreak[1]
  \\
  =
  &\frac{\underline{\ell}}{\ell_1} 
  \Big(
  \Var(Y^{\hat{f}}|W{=}1,\hat{f}) 
  -
  2 \frac{h_1 - \ell_1}{h_1} \Cov(Y^{\hat{f}},\widehat{Y}^{\hat{f}}|W{=}1,\hat{f},\hat{f}^\dagger)
  \\
  &+ \frac{h_1 - \ell_1}{h_1} \left( \frac{h_1 - \ell_1}{h_1} + \frac{\ell_1}{h_1} \right) \Var(\widehat{Y}^{\hat{f}}|W{=}1,\hat{f},\hat{f}^\dagger)
  \Big)
  \\
  &+ \frac{\underline{\ell}}{\ell_0}
  \Big(
  \Var(Y^{\hat{f}}|W{=}0,\hat{f})
  -
  2 \frac{h_0 - \ell_0}{h_0} \Cov(Y^{\hat{f}},\widehat{Y}^{\hat{f}}|W{=}0,\hat{f},\hat{f}^\dagger)
  \\
  &+ \frac{h_0 - \ell_0}{h_0} \left( \frac{h_0 - \ell_0}{h_0} + \frac{\ell_0}{h_0} \right) \Var(\widehat{Y}^{\hat{f}}|W{=}0,\hat{f},\hat{f}^\dagger)
  \Big)
  \\
  =
  &\frac{\underline{\ell}}{\ell_1} 
\left( \sfrac{\ell_1}{h_1}
   \Var(Y^{\hat{f}}|W{=}1,\hat{f}) + (1 - \sfrac{\ell_1}{h_1}) \Var(\widehat{Y}^{\hat{f}}-Y^{\hat{f}}|W{=}1,\hat{f},\hat{f}^\dagger) \right)
  \\
  &+ \frac{\underline{\ell}}{\ell_0}
\left( \sfrac{\ell_0}{h_0}
   \Var(Y^{\hat{f}}|W{=}0,\hat{f}) + (1 - \sfrac{\ell_0}{h_0}) \Var(\widehat{Y}^{\hat{f}}-Y^{\hat{f}}|W{=}0,\hat{f},\hat{f}^\dagger) \right)
\end{align*}
which also implies that $\sum_{i \in \HO} \E[X_i^2|\mathcal{F}] = \Sigma^\dagger \geq \varepsilon$.
The result follows as for \autoref{prop:Inference}.
\end{proof}

\end{document}